\newtheorem{prop}{Proposition}
\begin{document}

\title{Subspace Evolution and Transfer (SET) for Low-Rank Matrix Completion}

\author{Wei Dai$^{1}$, Olgica Milenkovic$^{1}$ and Ely Kerman$^{2}$\\
$^{1}$Department of Electrical and Computer Engineering, $^{2}$Department
of Mathematics\\
University of Illinois at Urbana-Champaign\\
 Email: \{weidai07,milenkov, ekerman\}@illinois.edu}
\maketitle
\begin{abstract}
We describe a new algorithm, termed subspace evolution and transfer
(SET), for solving low-rank matrix completion problems. The algorithm
takes as its input a subset of entries of a low-rank matrix, and outputs
\emph{one} low-rank matrix consistent with the given observations.
The completion task is accomplished by searching for a column space
on the Grassmann manifold that matches the incomplete observations.
The SET algorithm consists of two parts -- subspace evolution and
subspace transfer. In the evolution part, we use a gradient descent
method on the Grassmann manifold to refine our estimate of the column
space. Since the gradient descent algorithm is not guaranteed to converge,
due to the existence of barriers along the search path, we design
a new mechanism for detecting barriers and transferring the estimated
column space across the barriers. This mechanism constitutes the core
of the transfer step of the algorithm. The SET algorithm exhibits
excellent empirical performance for both high and low sampling rate
regimes. \end{abstract}
\begin{keywords}
Grassmann manifold, linear subspace, matrix completion, non-convex
optimization. 
\end{keywords}
\thispagestyle{empty}

\renewcommand{\thefootnote}{\fnsymbol{footnote}} \footnotetext[0]{The authors would like to thank Dayu Huang for his help in designing the employed line-search procedure, and to acknowledge useful discussions with Yoram Bresler, Justin Haldar, Angelia Nedich, and Zoi Rapti. Furthermore, the authors would also like to thank the authors of  for providing online software packages for their matrix completion algorithms. Parts of the results in the paper were presented at ICASSP 2010, Dallas, Texas.} \renewcommand{\thefootnote}{\arabic{footnote}} \setcounter{footnote}{0}

\section{Introduction}

Suppose that we observe a subset of entries of a matrix. The matrix
completion problem asks when and how the matrix can be recovered based
on the observed entries. In general, this reconstruction task is ill-posed
and computationally intractable. However, if the data matrix is known
to have low-rank, exact recovery can be accomplished in an efficient
manner with high probability, provided that sufficiently many entries
are revealed. Low-rank matrix completion problems have received considerable
interests due to their wide applications, ranging from collaborative
filtering (the NETFLIX challenge) to sensor network tomography. For
an overview of these applications, the reader is referred to \cite{Candes_Recht_2008_matrix_completion}.

An efficient way to solve the completion problem is via convex relaxation.
Instead of looking at rank-restricted matrices, one can search for
a matrix with minimum nuclear norm, subject to data consistency constraints.
Although in general nuclear norm minimization is not equivalent to
rank minimization, the former approach recovers the same solution
as the latter if the data matrix satisfies certain incoherence conditions
\cite{candes_tao_power_2009}. More importantly, nuclear norm minimization
can be accomplished in polynomial time by using semi-definite programming,
singular value thresholding (SVT) \cite{cai_singular_2008}, or methods
adapted from robust principal component analysis \cite{Candes2009_robust_principal_component}.

Several low-complexity alternatives to nuclear norm minimization have
been proposed so far. Realizing the intimate relationship between
compressive sensing and low-rank matrix completion, a few approaches
for low-rank completion can be viewed as generalization of those for
compressive sensing reconstruction. In particular, the ADMiRA algorithm
\cite{lee_Bresler_admira:_2009} is a counterpart of the subspace
pursuit (SP) \cite{Dai_2009_Subspace_Pursuit} and CoSaMP \cite{Tropp_2009_CoSaMP}
algorithms, while the singular value projection (SVP) method \cite{Meka2009_SVP}
extends the iterative hard thresholding (IHT) \cite{Blumensath_Davies_2009_IHT}
approach. There are other approaches that rely more on the specific
structures of the low-rank matrices. The power factorization algorithm
described in \cite{Haldar_Hernando_powerfactorization_2009} takes
an alternating optimization approach. In the OptSpace algorithm described
in \cite{montanari_keshavan_matrix_2009}, a simultaneous optimization
on both column and row spaces is employed. 

We address a more general class of problems in low-rank matrix completion
-- \emph{consistent} completion. Consistent completion extends the
previous completion framework in that it does not require the existence
of a unique solution to the problem. This extension seems questionable
at first glance -- in highly \emph{undersampled} observation regimes,
there may exist many low-rank matrices that match the observations
-- which makes the final result have less practical value. Nevertheless,
the consistent completion paradigm allows for identifying convergence
problems with standard completion techniques, and it does not require
any additional structure on the matrix, such as incoherence. Furthermore,
as will be shown in the subsequent exposition, when confronted with
very sparsely sampled matrices \emph{all methods known so far} fail
to produce any solution to the problem, despite the fact that many
exist. Finally, even in the sampling regime for which SVT, OptSpace
and other techniques have provable, unique reconstruction performance
guarantees, the consistent completion technique described in this
contribution exhibits significantly better results.

To solve the consistent matrix completion problem, we propose a novel
subspace evolution and transfer (SET) method. We show that the matrix
completion problem can be solved by searching for a column space (or,
alternatively, for a row space) that matches the observations. As
a result, optimization on the Grassmann manifold, i.e., subspace evolution,
plays a central role in the algorithm. However, there may exist {}``barriers''
along the search path that prevent subspace evolution from converging
to a global optimum. To address this problem, in the subspace transfer
part, we design mechanisms to detect and cross barriers. The SET algorithm
improves the recovery performance not only in high sampling rate regime
but also in low sampling rate regime where there may exist many low-rank
solutions. Empirical simulations demonstrate the excellent performance
of the proposed algorithm.

The SET algorithm employs a similar approach as that of the OptSpace
algorithm \cite{montanari_keshavan_matrix_2009} in terms of using
optimization over Grassmann manifolds. Still, the SET approach substantially
differs from the method supporting OptSpace \cite{montanari_keshavan_matrix_2009}.
Searching over only one space (column or row space) represents one
of the most significant differences: in OptSpace, one searches \emph{both}
column and row spaces simultaneously, which introduces numerical and
analytical difficulties. Moreover, when optimizing over the column
space, one has to take care of {}``barriers'' that prevent the search
procedure from converging to a global optimum, an issue that was not
addressed before since it was obscured by simultaneous column and
row space searches.

The paper is organized as follows. In Section~\ref{sec:consistent}
we introduce the consistent low-rank completion problem, and describe
the terminology used throughout the paper. In Section~\ref{sec:SET}
we outline the steps of the SET algorithm. Simulation results are
presented in Section~\ref{sec:Simulations}. All proofs are listed
in the Appendix sections.

\vspace{-0.06in}

\section{\label{sec:consistent}Consistent Matrix Completion}

Let $\bm{X}\in\mathbb{R}^{m\times n}$ be an unknown matrix with rank
$r\ll\min\left(m,n\right)$, and let $\Omega\subset\left[m\right]\times\left[n\right]$
be the set of indices of the observed entries, where $\left[K\right]=\left\{ 1,2,\cdots,K\right\} $.
Define the projection operator $\mathcal{P}_{\Omega}$ by \begin{align*}
\mathcal{P}_{\Omega}:\;\mathbb{R}^{m\times n} & \rightarrow\mathbb{R}^{m\times n}\\
\mathcal{P}_{\Omega}(\bm{X}) & \mapsto\bm{X}_{\Omega},\;\mbox{where }\left(\bm{X}_{\Omega}\right)_{i,j}=\begin{cases}
\bm{X}_{i,j} & \mbox{if }\left(i,j\right)\in\Omega\\
0 & \mbox{if }\left(i,j\right)\notin\Omega\end{cases}.\end{align*}
 The \emph{consistent matrix completion} problem is to find \emph{one}
rank-$r$ matrix $\bm{X}^{\prime}$ that is consistent with the observations
$\bm{X}_{\Omega}$, i.e., \begin{align}
\left(P0\right):\; & \mbox{find a }\bm{X}^{\prime}\mbox{ such that }\nonumber \\
 & \mbox{rank}\left(\bm{X}^{\prime}\right)\le r\mbox{ and }\mathcal{P}_{\Omega}\left(\bm{X}^{\prime}\right)=\mathcal{P}_{\Omega}\left(\bm{X}\right)=\bm{X}_{\Omega}.\label{eq:P0}\end{align}
 This problem is well defined as all our instances of $\bm{X}_{\Omega}$
are generated from matrices $\bm{X}$ with rank $r$ and therefore
there must exist at least one solution. Here, like in other approaches
\cite{lee_Bresler_admira:_2009,Haldar_Hernando_powerfactorization_2009,montanari_keshavan_matrix_2009},
we assume that the rank $r$ is given. In practice, one may try to
sequentially guess a rank bound until a satisfactory solution has
been found.

We also introduce the (standard) projection operator $\mathcal{P}$,
\begin{align*}
\mathcal{P}:\;\mathbb{R}^{m}\times\mathbb{R}^{m\times k} & \rightarrow\mathbb{R}^{m}\\
\mathcal{P}\left(\bm{x},\bm{U}\right) & \mapsto\bm{y}=\bm{U}\bm{U}^{\dagger}\bm{x},\end{align*}
 where $1\le k\le m$, and where the superscript $\dagger$ denotes
the pseudoinverse of a matrix. That is, $\mathcal{P}\left(\bm{x},\bm{U}\right)$
gives the projection of the vector $\bm{x}$ on the hyperplane spanned
by the matrix $\bm{U}$, i.e., $\mbox{span}\left(\bm{U}\right)$.
It should be observed that $\bm{U}^{\dagger}\bm{x}$ is the global
minimizer of the quadratic optimization problem $\min_{\bm{w}\in\mathbb{R}^{k}}\;\left\Vert \bm{x}-\bm{U}\bm{w}\right\Vert _{F}^{2}.$

\subsection{Why optimizing over column spaces only?}

In this section, we show that the problem $\left(P0\right)$ is equivalent
to finding a column space consistent with the observations.

Let $\mathcal{U}_{m,r}$ be the set of $m\times r$ matrices with
$r$ orthonormal columns, i.e., $\mathcal{U}_{m,r}=\left\{ \bm{U}\in\mathbb{R}^{m\times r}:\;\bm{U}^{T}\bm{U}=\bm{I}_{r}\right\} .$
Define a function \begin{align}
f:\;\mathcal{U}_{m,r} & \rightarrow\mathbb{R}\nonumber \\
f(\bm{U}) & \mapsto\underset{\bm{W}\in\mathbb{R}^{n\times r}}{\min}\left\Vert \bm{X}_{\Omega}-\mathcal{P}_{\Omega}\left(\bm{U}\bm{W}^{T}\right)\right\Vert _{F}^{2},\label{eq:f_U}\end{align}
 where $\left\Vert \cdot\right\Vert _{F}$ denotes the Frobenius norm.
The function $f$ captures the consistency between the matrix $\bm{U}$
and the observations $\bm{X}_{\Omega}$ : if $f\left(\bm{U}\right)=0$,
then there exists a matrix $\bm{W}$ such that the rank-$r$ matrix
$\bm{U}\bm{W}^{T}$ satisfies $\mathcal{P}_{\Omega}\left(\bm{U}\bm{W}^{T}\right)=\bm{X}_{\Omega}$.
Hence, the consistent matrix completion problem is equivalent to \begin{align}
\left(P1\right):\; & \mbox{find }\bm{U}\in\mathcal{U}_{m,r}\;\mbox{such that }f\left(\bm{U}\right)=0.\label{eq:P1}\end{align}

An important property of the objective function $f$ is that $f$
is invariant under rotations. More precisely, $f\left(\bm{U}\right)=f\left(\bm{U}\bm{V}\right)$
for any $r$-by-$r$ orthogonal matrix $\bm{V}\in\mathcal{U}_{r,r}$.
This can be easily verified, as $\bm{U}\bm{W}^{T}=\left(\bm{U}\bm{V}\right)\left(\bm{W}\bm{V}\right)^{T}$.
Hence, the function $f$ depends only on the subspace spanned by the
columns of $\bm{U}$, i.e., the $\mbox{span}\left(\bm{U}\right)$.
Note that all columns of the matrix of the form $\bm{U}\bm{W}^{T}$
lie in the linear subspace $\mbox{span}\left(\bm{U}\right)$. The
consistent matrix completion problem essentially reduces to finding
a column space consistent with the observed entries. Note that instead
of identifying the column space in which the observations lie, one
can also use the row space instead. All results and the problem formulation
remain valid in this case as well. Which space to search over will
depend on the dimension of the matrix, and the particular sampling
pattern (which determines the density of rows and columns of the matrix).
In addition, one can run in parallel two search procedures - one on
the column space, the other on the row space. Here, we only focus
on the simplest scenario, and restrict our attention to column spaces.

\subsection{Grassmann manifolds and geodesics}

We find the following definitions useful for the exposition to follow.
The Grassmann manifold $\mathcal{G}_{m,r}$ is the set of all $r$-dimensional
linear subspaces (hyperplanes through the origin) in $\mathbb{R}^{n}$,
i.e., $\mathcal{G}_{m,r}=\left\{ \mbox{span}\left(\bm{U}\right):\;\bm{U}\in\mathcal{U}_{m,r}\right\} $.
Given a subspace $\mathscr{U}\in\mathcal{G}_{m,r}$, one can always
find a matrix $\bm{U}\in\mathcal{U}_{m,r}$, such that $\mathscr{U}=\mbox{span}\left(\bm{U}\right)$.
The matrix $\bm{U}$ is referred to as a generator matrix of $\mathscr{U}$
and the columns of $\bm{U}$ are often referred to as an orthonormal
basis of $\mathscr{U}$. Since $\mbox{span}\left(\bm{U}\right)=\mbox{span}\left(\bm{U}\bm{V}\right)$
for all $\bm{V}\in\mathcal{U}_{r,r}$, it is clear that the generator
matrix for a given subspace is not unique. Nevertheless, a given matrix
$\bm{U}\in\mathcal{U}_{m,r}$ uniquely defines a subspace. For this
reason, we henceforth use $\bm{U}$ to represent its induced subspace.

To search for a consistent column space, we use a gradient descent
method on the Grassmann manifold. For this purpose, we introduce the
notion of a geodesic curve in the Grassmann manifold. Roughly speaking,
a geodesic curve is an analogue of a straight line in an Euclidean
space: given two points on the manifold, the geodesic curve connecting
them is the path of the shortest length in the manifold. Let $\bm{U}\left(t\right)$
be a geodesic curve (parametrized by $t\in\mathbb{R}$) in the Grassmann
manifold. Denote the starting point of this geodesic curve by $\bm{U}\left(0\right)=\bm{U}\in\mathcal{U}_{m,r}$,
and the direction by $\dot{\bm{U}}\left(0\right)=\bm{H}\in\mathbb{R}^{m,r}$.
Let $\bm{H}=\bm{U}_{H}\bm{S}_{H}\bm{V}_{H}^{T}$ be the compact singular
value decomposition of $\bm{H}$, and let $s_{1},\cdots,s_{r}$ denote
the singular values of $\bm{H}$ in descending order. Then the corresponding
geodesic curve is given by \cite{edelman_optimization_manifolds_1998}
\begin{equation}
\bm{U}\left(t\right)=\left[\bm{U}\bm{V}_{H},\bm{U}_{H}\right]\left[\begin{array}{c}
\cos\bm{S}t\\
\sin\bm{S}t\end{array}\right]\bm{V}_{H}^{T},\label{eq:geodesic-curve}\end{equation}
 where $\cos\bm{S}t\in\mathbb{R}^{r\times r}$ and $\sin\bm{S}t\in\mathbb{R}^{r\times r}$
are $r\times r$ diagonal matrices with diagonal entries $\cos\left(s_{1}t\right),\cdots,\cos\left(s_{r}t\right)$
and $\sin\left(s_{1}t\right),\cdots,\sin\left(s_{r}t\right)$, respectively.

When $\bm{H}$ has rank one, i.e., $s_{2}=s_{3}=\cdots=s_{r}=0$,
the equation for the geodesic curve has a particularly simple form.
In this case, let $\bm{u}_{1},\cdots,\bm{u}_{r}$ be the columns of
the matrix $\bm{U}\bm{V}_{H}$.%
\footnote{Note that $\mbox{span}\left(\bm{U}\right)=\mbox{span}\left(\bm{U}\bm{V}_{H}\right)$.
The starting point (in the Grassmann manifold) does not change.%
} Let $\bm{h}\in\mathcal{U}_{m,1}$ be the left singular vector of
$\bm{H}$ corresponding to the largest singular value. After a change
of variables, the geodesic curve can be written as%
\footnote{Again, although the matrix $\bm{U}\left(t\right)$ in (\ref{eq:geodesic-curve-rank1})
and the matrix $\bm{U}\left(t\right)$ in (\ref{eq:geodesic-curve})
may be different, both matrices generate the same hyperplane in the
Grassmann manifold $\mathcal{G}_{m,r}$. Therefore, Equations (\ref{eq:geodesic-curve})
and (\ref{eq:geodesic-curve-rank1}) describe the same geodesic curve. %
} \begin{equation}
\bm{U}\left(t\right)=\left[\bm{u}_{1}\cos t+\bm{h}\sin t,\bm{u}_{2},\cdots,\bm{u}_{r}\right],\quad t\in\left[0,\pi\right).\label{eq:geodesic-curve-rank1}\end{equation}
 Here, the range of values for the parameter $t$ is restricted to
$\left[0,\pi\right)$, since \begin{align*}
\mbox{span}\left(\bm{U}\left(t+\pi\right)\right) & =\mbox{span}\left(\left[-\bm{u}_{1}\cos t-\bm{h}\sin t,\bm{u}_{2},\cdots,\bm{u}_{r}\right]\right)\\
 & =\mbox{span}\left(\bm{U}\left(t\right)\right),\end{align*}
 and therefore $\mbox{span}\left(\bm{U}\left(t\right)\right)$ is
a periodic function with period $\pi$.

\section{The SET Algorithm - A Two Step Procedure}

\label{sec:SET}

\subsection{The SET algorithm: a high level description}

Our algorithm aims to minimize the objective function $f\left(\bm{U}\right)$.
The basic component is a gradient search approach: for a given estimate
$\bm{U}$, we search in the gradient descent direction for a minimizer.
This part of the algorithm is referred to as {}``subspace evolution''.
The details are presented in Section \ref{sub:Subspace-Evolution}.

The main difficulty that arises during the gradient descent search,
and makes the SET algorithm highly non-trivial, is when one encounters
{}``barriers''. Careful inspection reveals that the objective function
$f$ can be decomposed into a sum of atomic functions, each of which
involves only one column of $\bm{X}_{\Omega}$ (see Section \ref{sub:Subspace-Transfer}
for details). Along the gradient descent path, the individual atomic
functions may imply different search directions: some of the functions
may decrease and some others may increase in the same direction. The
increases of some atomic functions may result in {}``bumps'' in
the $f$ curve, which block the search procedure from reaching a global
optima and are therefore referred to as \emph{barriers}. The main
component of the {}``transfer'' part of the SET algorithm is to
identify whether there exist barriers along the gradient descent path.
Detecting barriers is in general a very difficult task, since one
obviously does not know the locations of global minima. Nevertheless,
we observe that barriers can be detected by the existence of atomic
functions with inconsistent descent directions. Such an inconsistence
can be seen as an indicator for the existence of a barrier. When a
barrier is expected, the algorithm {}``transfers'' the current point
of the line search - i.e., its corresponding space - to the other
side of the barrier, and proceeds with the search from that point.
Such a transfer does not overshoot global minima as we enforce consistency
of the steepest descent directions at the points before and after
the transfer. The details of barrier detection and subspace transfer
are presented in Sections \ref{sub:Subspace-Transfer}, \ref{sub:Barrier-example},
\ref{sub:Barrier-detection}, and \ref{sub:compute-tmin-tmax}.

The major steps of the SET algorithm are given in Algorithm \ref{alg:SET}.
Here, we introduce an error tolerance parameter $\epsilon_{e}>0$.
The stopping criterion is given by $\left\Vert \bm{X}_{\Omega}-\mathcal{P}_{\Omega}\left(\bm{X}^{\prime}\right)\right\Vert _{F}^{2}\le\epsilon_{e}\left\Vert \bm{X}_{\Omega}\right\Vert _{F}^{2}$
where $\bm{X}^{\prime}$ denotes the estimated low-rank matrix. In
our simulations, we set $\epsilon_{e}=10^{-6}$. The SET algorithm
described below only searches for an optimal column space, represented
by $\bm{U}$. Other modifications are possible, as already pointed
out. For example, to speed up the process, one may alternatively optimize
over $\bm{U}$ and $\bm{V}$ (representing the column and row spaces,
respectively). These extensions are not described in the manuscript.

\begin{algorithm}[tbh]

\caption{\label{alg:SET}The SET algorithm}

\textbf{Input}: $\bm{X}_{\Omega}$, $\Omega$, $r$ and $\epsilon_{e}$.

\textbf{Output}: $\bm{X}^{\prime}$.

\textbf{Initialization}: Randomly generate a $\bm{U}\in\mathcal{U}_{m,r}$.

\textbf{Steps}: Execute the following steps iteratively: 
\begin{enumerate}
\item Perform subspace transfer algorithm described in Algorithm \ref{alg:ST}. 
\item Perform subspace evolution algorithm described in Algorithm \ref{alg:SE}. 
\item According to (\ref{eq:f_U}) find the optimal $\bm{W}_{U}$ and set
$\bm{X}^{\prime}=\bm{U}\bm{W}_{U}$. If $\left\Vert \bm{X}_{\Omega}-\mathcal{P}_{\Omega}\left(\bm{X}^{\prime}\right)\right\Vert _{F}^{2}\le\epsilon_{e}\left\Vert \bm{X}_{\Omega}\right\Vert _{F}^{2}$,
output $\bm{X}^{\prime}$ and quit. Otherwise, go to Step 1). 
\end{enumerate}

\end{algorithm}

\subsection{\label{sub:Subspace-Evolution}Subspace evolution}

For the optimization problem at hand, we refine the current column
space estimate $\bm{U}$ using a gradient descent method. For a given
$\bm{U}\in\mathcal{U}_{m,r}$, it is straightforward to solve the
least square problem\begin{equation}
\underset{\bm{W}\in\mathbb{R}^{r\times n}}{\min}\left\Vert \bm{X}_{\Omega}-\mathcal{P}_{\Omega}\left(\bm{U}\bm{W}\right)\right\Vert _{F}^{2}.\label{eq:least-square-W}\end{equation}
 Denote the optimal solution by $\bm{W}_{U}$. Let $\bm{X}_{r}=\bm{X}_{\Omega}-\mathcal{P}_{\Omega}\left(\bm{U}\bm{W}_{U}\right)$
be the residual matrix. Then the gradient%
\footnote{The gradient is well defined almost everywhere in $\mathcal{U}_{m,r}$.%
} of $f$ at $\bm{U}$ is given by \begin{align}
\nabla_{\bm{U}}f & =-2\bm{X}_{r}\bm{W}_{U}^{T}.\label{eq:gradient}\end{align}
 The proof of this claim is given in Appendix \ref{sub:pf-gradient}.
The gradient $\nabla_{\bm{U}}f$ gives the direction along which the
objective function $f$ increases the fastest. In classical gradient
descent methods, the search path direction is opposite to the gradient,
i.e., $-\nabla_{\bm{U}}f$. In order to make the search step more
suitable for the transfer step, we choose the search direction as
follows. Consider the singular value decomposition of the matrix $\nabla_{\bm{U}}f$.
Let $\bm{h}\in\mathcal{U}_{m,1}$ and $\bm{v}\in\mathcal{U}_{r,1}$
be the left and right singular vectors corresponding to the largest
singular value of $\nabla_{\bm{U}}f$.%
\footnote{With probability one, the largest singular value is strictly positive
and distinct from other singular values.%
} Then the search direction is defined as \begin{equation}
\bm{H}=-\bm{h}\bm{v}^{T}.\label{eq:Search-Direction}\end{equation}
 It can be easily verified that if $\nabla_{\bm{U}}f\ne\bm{0}$ then
$\left\langle \bm{H},\nabla_{\bm{U}}f\right\rangle =\mbox{trace}\left(\bm{H}^{T}\nabla_{\bm{U}}f\right)<0$,
and therefore the objective function decreases along the direction
of $\bm{H}$. The geodesic curve starting from $\bm{U}$ and pointing
along $\bm{H}$ can be computed via (\ref{eq:geodesic-curve-rank1}).

The subspace evolution part is designed to search for a {}``neighboring
minimizer'' of the function $f$ \emph{along the geodesic curve}.
It is an analogue of the line search procedure in Euclidean space.
Its continuous counterpart consists of moving the estimate $\bm{U}$
continuously along the direction $\bm{H}$ until the objective function
stops decreasing. For computer simulations, one has to discretize
the continuous counterpart. Our implementation includes two steps.
Let $t^{*}$ denote the neighboring minimizer along the geodesic curve.
The goal of the first step is to identify an upper bound on $t^{*}$,
denoted by $t_{\max}$. Since $f\left(t\right)$ is periodic with
period $\pi$, $t_{\max}$ is upper bounded by $\pi$. The second
step is devoted to locating the minimizer $t^{*}\in\left[0,t_{\max}\right]$
accurately by iteratively applying the golden section rule \cite{gill_practical_optimization_1982}.
These two steps are described in Algorithm \ref{alg:SE}. The constants
are set to $\epsilon=10^{-9}$, $c_{1}=\left(\sqrt{5}-1\right)/2$,
$c_{2}=c_{1}/\left(1-c_{1}\right)$ and $itN=10$. Note that our discretized
implementation is not optimized with respect to its continuous counterpart,
but is sufficiently accurate in practice.

\begin{algorithm}[tbh]

\caption{\label{alg:SE}Subspace evolution.}

\textbf{Input}: $\bm{X}_{\Omega}$, $\Omega$, $\bm{U}$, and $itN$.

\textbf{Output}: $t^{*}$ and $\bm{U}\left(t^{*}\right)$.

\textbf{Initialization}: Compute the gradient and the search direction
according to (\ref{eq:gradient}) and (\ref{eq:Search-Direction})
respectively. The geodesic curve $\bm{U}\left(t\right)$ along the
search direction can be computed via (\ref{eq:geodesic-curve-rank1}).

\textbf{Step A}: find $t_{\max}\le\pi$ such that $t^{*}\in\left[0,t_{\max}\right]$

Let $t^{\prime}=\epsilon\pi$. 
\begin{enumerate}
\item Let $t^{\prime\prime}=c_{2}\cdot t^{\prime}$. If $t^{\prime\prime}>\pi$,
then $t_{\max}=\pi$. Quit Step A. 
\item If $f\left(\bm{U}\left(t^{\prime\prime}\right)\right)>f\left(\bm{U}\left(t\right)\right)$,
then $t_{\max}=t^{\prime\prime}$. Quit Step A. 
\item Otherwise, $t^{\prime}=t^{\prime\prime}$. Go back to step 1). 
\end{enumerate}
\textbf{Step B}: numerically search for $t^{*}$ in $\left[0,t_{\max}\right]$.

Let $t_{1}=t_{\max}/c_{2}^{2}$, $t_{2}=t_{\max}/c_{2}$, $t_{4}=t_{\max}$,
and $t_{3}=t_{1}+c_{1}\left(t_{4}-t_{1}\right)$. Let $itn=1$. Perform
the following iterations. 
\begin{enumerate}
\item If $f\left(\bm{U}\left(t_{1}\right)\right)>f\left(\bm{U}\left(t_{2}\right)\right)>f\left(\bm{U}\left(t_{3}\right)\right)$,
then $t_{1}=t_{2}$, $t_{2}=t_{3}$, and $t_{3}=t_{1}+c_{1}\left(t_{4}-t_{1}\right)$. 
\item Else, $t_{4}=t_{3}$, $t_{3}=t_{2}$ and $t_{2}=t_{1}+\left(1-c_{1}\right)\left(t_{4}-t_{1}\right)$. 
\item $itn=itn+1$. If $itn>itN$, quit the iterations. Otherwise, go back
to step 1). 
\end{enumerate}
Let $t^{*}=\underset{t\in\left\{ t_{1},\cdots,t_{4}\right\} }{\arg\min}f\left(\bm{U}\left(t\right)\right)$
and compute $\bm{U}\left(t^{*}\right)$. 
\end{algorithm}

\subsection{\label{sub:Subspace-Transfer}Subspace transfer}

Unfortunately, the objective function $f\left(\bm{U}\right)$ is typically
not a convex function of $\bm{U}$. The described linear search procedure
may not converge to a global minimum because the search path may be
blocked by what we call {}``barriers''. In subsequent subsections,
we show how {}``barriers'' arise in matrix completion problems and
how to overcome the problem introduced by barriers.

At this point, we formally introduce the decoupling principle. This
principle is essential in understanding the behavior of the objective
function. It implies that the objective function $f\left(\bm{U}\left(t\right)\right)$
can be decoupled into a sum of atomic functions, each of which is
relatively simple to analyze. Specifically, the objective function
$f\left(\bm{U}\left(t\right)\right)$ is the squared Frobenius norm
of the residue matrix; it can be decomposed into a sum of the squared
Frobenius norms of the residue columns. Let $\bm{x}_{\Omega_{j}}\in\mathbb{R}^{m\times1}$
be the $j^{th}$ column of the matrix $\bm{X}_{\Omega}$. Let $\mathcal{P}_{\Omega_{j}}$
be the projection operator corresponding to the $j^{th}$ column,
defined by \begin{align}
\mathcal{P}_{\Omega_{j}}:\;\mathbb{R}^{m} & \rightarrow\mathbb{R}^{m}\nonumber \\
\mathcal{P}_{\Omega_{j}}(\bm{v}) & \mapsto\bm{v}_{\Omega_{j}},\;\mbox{where}\;\left(\bm{v}_{\Omega_{j}}\right)_{i}=\begin{cases}
\bm{v}_{i} & \mbox{if }\left(i,j\right)\in\Omega\\
0 & \mbox{if }\left(i,j\right)\notin\Omega\end{cases}.\label{eq:proj-vector}\end{align}
 Then the objective function $f\left(\bm{u}\left(t\right)\right)$
can be written as a sum of $n$ atomic functions: \begin{align}
f\left(\bm{U}\left(t\right)\right) & =\underset{\bm{W}\in\mathbb{R}^{r\times n}}{\min}\left\Vert \bm{X}_{\Omega}-\mathcal{P}_{\Omega}\left(\bm{U}\left(t\right)\bm{W}\right)\right\Vert _{F}^{2}\nonumber \\
 & =\sum_{j=1}^{n}\underbrace{\min_{\bm{W}_{:j}\in\mathbb{R}^{r}}\left\Vert \bm{x}_{\Omega_{j}}-\mathcal{P}_{\Omega_{j}}\left(\bm{U}\left(t\right)\bm{W}_{:j}\right)\right\Vert _{F}^{2}}_{f_{j}\left(\bm{U}\left(t\right)\right)},\label{eq:f-decoupling}\end{align}
 where $\bm{W}_{:j}$ is the $j^{th}$ column of the matrix $\bm{W}$.
This decoupling principle can be easily verified by the additivity
of the squared Frobenius norm. A formal proof is presented in Appendix
\ref{sub:pf-decoupling}.

We study atomic functions along the geodesic curve in a rank-one direction
(\ref{eq:geodesic-curve-rank1}) and summarize their typical behavior
in the following proposition. 
\begin{prop}
\label{pro:atomic-f-periodic}Let $\bm{U}\left(t\right)$ be of the
form in (\ref{eq:geodesic-curve-rank1}). Given a vector $\bm{x}\in\mathbb{R}^{m}$
and an index set $\Omega\subset\left[m\right]$, consider the function
\begin{equation}
f_{\bm{x},\Omega}\left(\bm{U}\left(t\right)\right)=\underset{\bm{w}\in\mathbb{R}^{r}}{\min}\;\left\Vert \bm{x}_{\Omega}-\mathcal{P}_{\Omega}\left(\bm{U}\left(t\right)\bm{w}\right)\right\Vert _{F}^{2}.\label{eq:proposition-atomic-f-periodic}\end{equation}
 Then either one of the following two claims holds. 
\begin{enumerate}
\item The function $f_{\bm{x},\Omega}\left(\bm{U}\left(t\right)\right)$
is a constant function. 
\item The function $f_{\bm{x},\Omega}\left(\bm{U}\left(t\right)\right)$
is periodic, with period $\pi$. It has a unique minimizer, $t_{\min}\in\left[0,\pi\right)$,
and a unique maximizer, $t_{\max}\in\left[0,\pi\right)$. 
\end{enumerate}
\end{prop}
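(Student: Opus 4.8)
The plan is to compute the atomic function $f_{\bm{x},\Omega}(\bm{U}(t))$ explicitly as a function of $t$ and show it is a trigonometric rational function whose behaviour is dictated by a single $2\times 2$ quadratic form. First I would fix the rank-one geodesic $\bm{U}(t)=[\bm{u}_1\cos t+\bm{h}\sin t,\ \bm{u}_2,\dots,\bm{u}_r]$ from (\ref{eq:geodesic-curve-rank1}). Only the first column moves, so write $\bm{a}(t)=\bm{u}_1\cos t+\bm{h}\sin t$ and let $\bm{U}_0=[\bm{u}_2,\dots,\bm{u}_r]$ be the fixed part. The key observation is that $\mathrm{span}(\bm{U}(t))=\mathrm{span}(\bm{U}_0)\oplus\mathrm{span}(\bm{a}(t))$, and $\bm{a}(t)$ lies in the fixed two-dimensional plane $\Pi=\mathrm{span}(\bm{u}_1,\bm{h})$ (note $\bm{h}\perp\mathrm{span}(\bm{U})$, so $\bm{u}_1,\bm{h}$ are orthonormal and $\Pi\perp\mathrm{span}(\bm{U}_0)$). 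Hence I would first split off the contribution of $\bm{U}_0$: since projecting onto $\mathrm{span}(\bm{U}(t))$ restricted to $\Omega$ is \emph{not} an orthogonal decomposition (the sampling $\mathcal{P}_{\Omega_{}}$ destroys orthogonality), the cleanest route is to work directly with the least-squares residual. Write $\bm{x}_\Omega$, and observe $f_{\bm{x},\Omega}(\bm{U}(t))=\|\bm{x}_\Omega\|^2-\|\mathcal{P}_{\Omega}\bm{U}(t)\,\bm{w}^*(t)\|^2$ where $\bm{w}^*(t)$ is the minimiser; equivalently $f_{\bm{x},\Omega}(\bm{U}(t))=\min_{\bm w}\|\bm{x}_\Omega-(\mathcal{P}_\Omega\bm{U}(t))\bm w\|^2$, the squared distance from $\bm{x}_\Omega$ to the column space of $\bm{A}(t):=\mathcal{P}_\Omega\bm{U}(t)=[\,\mathcal{P}_\Omega\bm{a}(t)\ \ \mathcal{P}_\Omega\bm{U}_0\,]$.

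Next I would reduce to the two-dimensional plane. Let $Q$ be the orthogonal projector onto the orthogonal complement (within $\mathbb{R}^{|\Omega|}$, after restricting to coordinates in $\Omega_{}$) of $\mathrm{span}(\mathcal{P}_\Omega\bm{U}_0)$. Then the distance from $\bm{x}_\Omega$ to $\mathrm{span}(\bm{A}(t))$ equals the distance from $Q\bm{x}_\Omega$ to $\mathrm{span}(Q\mathcal{P}_\Omega\bm{a}(t))$ \emph{plus} the fixed quantity $\|(\bm I-Q)\bm x_\Omega - \text{(its projection onto }\mathcal{P}_\Omega\bm{U}_0)\|^2$ — i.e., one subtracts the $\bm{U}_0$-explained part once and for all, then deals with a single moving vector. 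Now $Q\mathcal{P}_\Omega\bm{a}(t)=\cos t\,(Q\mathcal{P}_\Omega\bm{u}_1)+\sin t\,(Q\mathcal{P}_\Omega\bm{h})=:\cos t\,\bm p+\sin t\,\bm q$ with $\bm p,\bm q$ fixed vectors. The distance from a fixed vector $\bm y:=Q\bm x_\Omega$ to the line $\mathbb{R}(\cos t\,\bm p+\sin t\,\bm q)$ is
\begin{equation}
f_{\bm{x},\Omega}(\bm{U}(t)) = \mathrm{const} + \|\bm y\|^2 - \frac{\langle \bm y,\cos t\,\bm p+\sin t\,\bm q\rangle^2}{\|\cos t\,\bm p+\sin t\,\bm q\|^2},\label{eq:atomic-explicit}
\end{equation}
valid whenever the denominator is nonzero; when $\cos t\,\bm p+\sin t\,\bm q=\bm 0$ for the relevant $t$ the added column contributes nothing and the value is just $\mathrm{const}+\|\bm y\|^2$ there, consistent by continuity. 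Writing $c=\cos t$, $s=\sin t$, both numerator and denominator in \eqref{eq:atomic-explicit} are quadratic forms in $(c,s)$, hence the fraction is $N(t)/D(t)$ with $N,D$ of the form $\alpha+\beta\cos 2t+\gamma\sin 2t$. This immediately gives period $\pi$ (since $c^2,s^2,cs$ all have period $\pi$), establishing the periodicity claim and explaining the $t\in[0,\pi)$ range already noted after (\ref{eq:geodesic-curve-rank1}).

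Finally I would analyse the shape of \eqref{eq:atomic-explicit} on $[0,\pi)$. If $\bm p$ and $\bm q$ are linearly dependent, or if $\bm y$ is orthogonal to $\mathrm{span}(\bm p,\bm q)$, or more generally if the rational term is constant, we land in case (1). Otherwise, writing the subtracted term as $R(t)=\langle\bm y,\bm p_t\rangle^2/\|\bm p_t\|^2$ with $\bm p_t=\cos t\,\bm p+\sin t\,\bm q$: $R(t)$ is exactly the squared cosine of the angle between $\bm y$ and the \emph{direction} of $\bm p_t$, scaled by $\|\bm y\|^2$. As $t$ ranges over $[0,\pi)$ the unit vector $\bm p_t/\|\bm p_t\|$ traces (once) a great-circle arc of directions in the $2$-plane $\mathrm{span}(\bm p,\bm q)$ — a one-parameter family of lines through the origin, i.e. an element of a "projective line" swept monotonically. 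The function "squared length of projection of the fixed vector $\bm y$ onto a rotating line" is a classic $\cos^2$-type function of the rotation angle: after the reparametrisation $\theta=\theta(t)$ giving the actual angle of $\bm p_t$, one gets $R = \|\bm y_\parallel\|^2\cos^2(\theta-\theta_0)$ where $\bm y_\parallel$ is the component of $\bm y$ in $\mathrm{span}(\bm p,\bm q)$ and $\theta_0$ its angle; this has a unique max and unique min on a half-turn. The only technical care needed is that $t\mapsto\theta(t)$ is a strictly monotone bijection from $[0,\pi)$ onto a half-open interval of angles of length $\pi$ — this follows because $(\cos t,\sin t)\mapsto\bm p_t$ is linear and injective on directions (the $2\times2$ matrix $[\bm p\ \bm q]$ having rank $2$ in the non-degenerate case), and a linear isomorphism of the plane carries the projective line to itself bijectively and orientation-preservingly or -reversingly, in either case monotonically. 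Pulling the unique extrema of the $\cos^2$ back through this monotone change of variable yields the unique $t_{\min},t_{\max}\in[0,\pi)$.

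The main obstacle I anticipate is not the algebra but the bookkeeping in the degenerate cases and the careful justification that the angle map $t\mapsto\theta(t)$ is a genuine monotone bijection of half-turns — in particular handling the case where $\|\bm p_t\|$ vanishes for some $t$ (the moving column collapses into $\mathrm{span}(\mathcal{P}_\Omega\bm U_0)$ after sampling), where \eqref{eq:atomic-explicit} must be interpreted by continuity and one must check no spurious second extremum is created. A clean way to sidestep part of this is to clear denominators: "$f_{\bm x,\Omega}(\bm U(t))$ is constant, or else $g(t):=(\mathrm{const}+\|\bm y\|^2-f)(t)\cdot D(t) = N(t)$ with $D(t)\ge 0$" and then study critical points of $f$ via $N'D-ND'=0$, a trigonometric polynomial of degree $\le 4$ in $t$ that one shows has exactly two roots in $[0,\pi)$ in the non-degenerate regime — but the geometric argument above is more transparent and is the one I would write up, relegating the degenerate-case checks to short remarks.
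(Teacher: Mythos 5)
Your proposal is correct and follows essentially the same route as the paper: you reduce to a single moving vector by projecting out the fixed columns (the content of the paper's Proposition~\ref{pro:projection-high-rank}) and then analyze the squared projection of a fixed vector onto a line rotating in a two-dimensional plane, obtaining period $\pi$ and a unique aligned position (maximizer) and unique orthogonal position (minimizer) per half-turn (the paper's Proposition~\ref{pro:atomic-fn-rank1}). The differences — your explicit trigonometric-rational formula for periodicity versus the paper's observation that $\bm{u}\left(t+\pi\right)=-\bm{u}\left(t\right)$, and your more explicit monotone angle reparametrization — are cosmetic.
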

\vspace{0.01in}

The proof is given in Appendix \ref{sub:pf-atomic-f-periodic} and
the computations of $t_{\min}$ and $t_{\max}$ are detailed in Section
\ref{sub:compute-tmin-tmax}.

\subsection{\label{sub:Barrier-example}Barrier - an illustration}

We use the following example to illustrate the concept of a barrier.
Consider an incomplete observation of a rank-one matrix \[
\bm{X}_{\Omega}=\left[\begin{array}{ccc}
? & 2 & 1\\
3 & ? & 1\\
3 & 2 & ?\end{array}\right],\]
 where question marks denote that the corresponding entries are unknown.
It is clear that the objective function $f\left(\bm{U}\left(t\right)\right)$
is minimized by $\bm{U}_{\bm{X}}=\frac{1}{\sqrt{3}}\left[1,1,1\right]^{T}$,
i.e., $f\left(\bm{U}_{\bm{X}}\right)=0$ and the recovered matrix
equals $\hat{\bm{X}}=\left[1,1,1\right]^{T}\cdot\left[3,2,1\right]$.
Let us study one of the atomic functions, say $f_{1}\left(\bm{U}\right)$.
For any $\bm{U}\in\mathcal{U}_{3,1}$ of the form $\left[\sqrt{1-2\epsilon^{2}},\epsilon,\epsilon\right]^{T}$
with $\epsilon\in\left[-1/\sqrt{2},1/\sqrt{2}\right]\backslash\left\{ 0\right\} $,
one has \[
f_{1}\left(\bm{U}\right)=\min_{w\in\mathbb{R}}\;\left\Vert \left[\begin{array}{c}
0\\
3\\
3\end{array}\right]-\left[\begin{array}{c}
0\\
\epsilon\\
\epsilon\end{array}\right]w\right\Vert _{F}^{2}=0.\]
 Similarly, For any $\bm{U}$ of the form $\left[\sqrt{1-2\epsilon^{2}},\epsilon,-\epsilon\right]^{T}$
with $\epsilon\in\left[-1/\sqrt{2},1/\sqrt{2}\right]$, one has \[
f_{1}\left(\bm{U}\right)=\min_{w\in\mathbb{R}}\;\left\Vert \left[\begin{array}{c}
0\\
3\\
3\end{array}\right]-\left[\begin{array}{c}
0\\
\epsilon\\
-\epsilon\end{array}\right]w\right\Vert _{F}^{2}=18.\]
 As a result, \begin{align*}
 & f_{1}\left(\bm{U}\right)=0,\quad\mbox{if }\bm{U}_{2}=\bm{U}_{3}\ne0;\\
 & f_{1}\left(\bm{U}\right)=18,\quad\mbox{if }\bm{U}_{2}=-\bm{U}_{3}.\end{align*}
 This gives us the two contours depicted in Fig. \ref{fig:contour}
(projected on the plane spanned by $\bm{U}_{2}$ and $\bm{U}_{3}$,
the second and the third entries of the vector $\bm{U}$ respectively).
Suppose that one starts with the initial guess $\bm{U}\left(0\right)=\frac{1}{\sqrt{102}}\left[-10,1,1\right]^{T}$.
Then $f\left(\bm{U}\left(0\right)\right)=\sum_{i=1}^{3}f_{i}\left(\bm{U}\left(0\right)\right)\le0+8+2=10$.
On the other hand, for any $\bm{U}$ in the preimage of $f_{1}\left(\bm{U}\right)=18$,
one has $f\left(\bm{U}\right)\ge18>10\ge f\left(\bm{U}\left(0\right)\right)$.
As a result, any gradient descent method (continuous version) can
not lead the estimate $\bm{U}\left(t\right)$ to cross the contour
$\left\{ \bm{U}:\; f_{1}\left(\bm{U}\right)=18\right\} $. That is,
the contour $f_{1}=18$ forms a {}``barrier'' for the line search
procedure. A more careful analysis reveals that the objective function
$f$ is not continuous at the point $\bm{U}=\left[1,0,0\right]^{T}$.
Our extensive simulations suggest that a gradient descent procedure
is typically trapped towards these singular points. See Fig. \ref{fig:search-path}
for an illustration of this phenomenon.

\begin{figure}
\begin{centering}
\hfill{}\subfloat[\label{fig:contour}Contours of $f_{1}$.]{\begin{centering}
\includegraphics[scale=0.75]{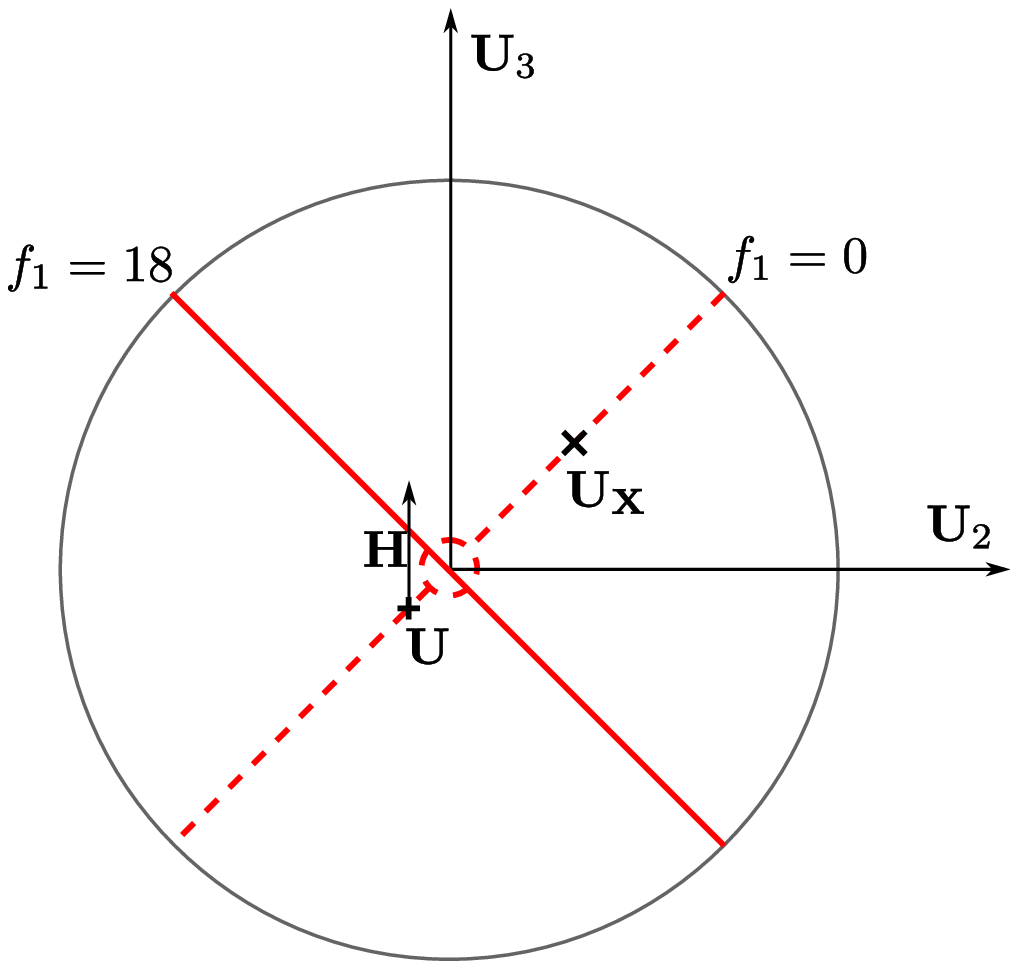}
\par\end{centering}

}\hfill{}\subfloat[\label{fig:search-path}Search paths with zooming in.]{\begin{centering}
\includegraphics[scale=0.75]{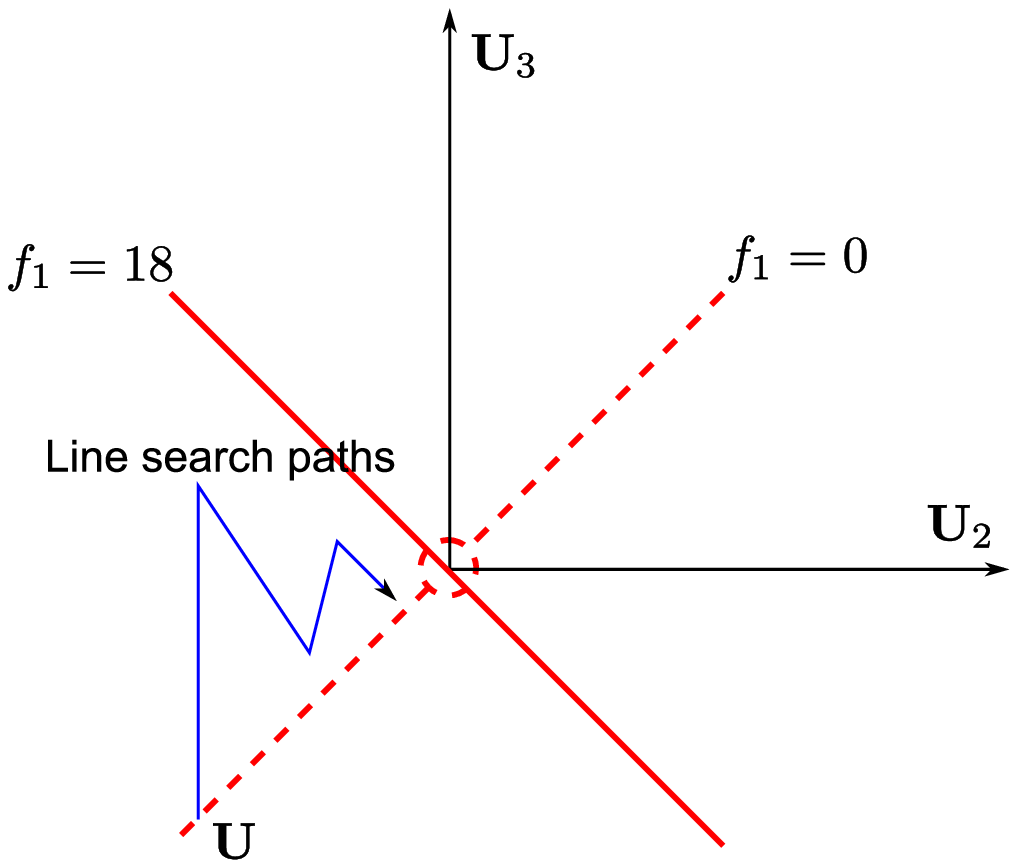}
\par\end{centering}

}\hfill{}
\par\end{centering}

\caption{\label{fig:example}An illustrative example for barriers.}

\end{figure}

\subsection{\label{sub:Barrier-detection}Barrier Detection and Subspace Transfer}

We describe a heuristic procedure for detecting barriers and transferring
the current estimate $\bm{U}$ from one side of a barrier to the other
side. 

The intuition behind barrier detection is as follows. Recall that
every atomic function is periodic and has a unique minimizer and maximizer
in one period. In the gradient descent direction, some atomic function
increase while some others decrease. On the other hand, in the matrix
completion problem, the objective function reaches zero at a global
minimizer. This implies that each atomic function reaches its minimum
at a global minimizer. That is, in a small neighborhood of a global
minimizer, the atomic functions should be {}``consistent'': there
should exist a small $\epsilon>0$ such that when current estimate
$\bm{U}$ is $\epsilon$-close to the global minimizer $\bm{U}_{\bm{X}}$,
there is no atomic function reaching its maximum value along the path
from current estimate $\bm{U}$ to the global minimizer $\bm{U}_{\bm{X}}$.
Following this intuition, we have the following definition of barriers.
Consider the geodesic path in (\ref{eq:geodesic-curve-rank1}) starting
from $\bm{U}$, pointing in the direction $\bm{H}$. Denote the unique
minimizer and maximizer of the $k^{th}$ atomic function by $t_{\min,k}$
and $t_{\max,k}$ (for constant atomic functions, we set $t_{\min,k}=t_{\max,k}=0$).
Refer to the atomic functions that decrease in the direction of $\bm{H}$
as consistent atomic functions. We say that the maximizer of the $k^{th}$
atomic function forms a barrier if 
\begin{enumerate}
\item In the $\bm{H}$ direction, there exists a consistent atomic function,
say the $j^{th}$ atomic function, such that the maximizer of the
$k^{th}$ atomic function appears before the minimizer of the $j^{th}$
atomic function. That is, there exists $j\in\left[n\right]$ such
that $0<t_{\max,k}<t_{\min,j}<t_{\max,j}<\pi$. 
\item The gradients of $f$ at $\bm{U}\left(0\right)$ and $\bm{U}\left(t_{\max,k}\right)$
are consistent (form a sharp angle), i.e., $\frac{d}{dt}f\left(\bm{U}\left(t\right)\right)|_{t=t_{\max,k}}<0$.
In Appendix \ref{sub:consistency}, we describe how to decide whether
$\frac{d}{dt}f\left(\bm{U}\left(t\right)\right)|_{t=t_{\max,k}}<0$. 
\end{enumerate}
Moreover, we say that the $j^{th}$ column of $\bm{X}_{\Omega}$ admits
barriers if there exists a $k\in\left[n\right]$ such that the maximizer
of the $k^{th}$ atomic function forms a barrier and $t_{\max,k}<t_{\min,j}<t_{\max,j}$.

\vspace{0cm}

Once barriers are detected, we transfer $\bm{U}$. To avoid overshooting,
the transfer destination should be {}``$\epsilon$-close'' to the
barrier. As $\epsilon\rightarrow0$, the transfer destination is on
the barrier ($\bm{U}\left(t_{\max,k}\right)$ for some $k$). In our
implementation, we focus on the {}``closest'' barriers to $\bm{U}$.
Define\begin{equation}
\mathcal{J}=\left\{ j:\;\mbox{the }j^{th}\mbox{ column of }\bm{X}_{\Omega}\;\mbox{admits barriers}\right\} ,\label{eq:def-set-J}\end{equation}
\begin{equation}
j^{*}=\underset{j\in\mathcal{J}}{\arg\min}\; t_{\min,j},\;\mbox{and}\label{eq:def-j-star}\end{equation}
\begin{align}
k^{*} & =\underset{k}{\arg\max}\;\left\{ t_{\max,k}:\;\mbox{the maximizer of the}\; k^{th}\;\mbox{atomic}\right.\nonumber \\
 & \qquad\left.\mbox{function forms a barrier and }t_{\max,k}<t_{\min,j^{*}}\right\} .\label{eq:def-k-star}\end{align}
 We transfer our current estimation $\bm{U}\left(0\right)$ to $\bm{U}\left(t_{\max,k^{*}}\right)$.

The subspace transfer part is a combination of barrier detection and
column space transfer. It is described in Algorithm \ref{alg:ST}.

\begin{algorithm}[tbh]

\caption{\label{alg:ST}Subspace transfer}

\textbf{Input}: $\bm{X}_{\Omega}$, $\Omega$, and $\bm{U}$.

\textbf{Output}: $t_{\mbox{tran}}$ and $\bm{U}\left(t_{\mbox{tran}}\right)$.

\textbf{Steps}: 
\begin{enumerate}
\item Compute $t_{\max,j}$ and $t_{\min,j}$ for each column $j$. 
\item Check whether there exist barriers.

\begin{enumerate}
\item Find $j^{*}$ and $k^{*}$ according to (\ref{eq:def-j-star}) and
(\ref{eq:def-k-star}), respectively. 
\item Let $t_{\mathrm{tran}}=t_{\max,k^{*}}$ and compute $\bm{U}\left(t_{\mathrm{tran}}\right)$
according to (\ref{eq:geodesic-curve-rank1}). 
\end{enumerate}
\item If no barrier is detected (the set $\mathcal{J}$ in (\ref{eq:def-set-J})
is empty), then $t_{\mbox{tran}}=0$ and $\bm{U}\left(t_{\mbox{tran}}\right)=\bm{U}$. 
\end{enumerate}

\end{algorithm}

\vspace{0.01in}

\subsection{\label{sub:compute-tmin-tmax}Computation of $t_{\min}$ and $t_{\max}$}

The subspace transfer part of the SET algorithm relies on the minimizers
and maximizers of atomic functions. This subsection presents the details
for computing these extremals. 

Let $\bm{U}\left(t\right)$ be of the form in (\ref{eq:geodesic-curve-rank1}).
Also, let $\Omega\subset\left[m\right]$ be an index set. Define \begin{align*}
\bm{U}_{\Omega}\left(t\right) & =\left[\mathcal{P}_{\Omega}\left(\bm{u}_{1}\cos t+\bm{h}\sin t\right),\mathcal{P}_{\Omega}\left(\bm{u}_{2}\right),\cdots,\mathcal{P}_{\Omega}\left(\bm{u}_{r}\right)\right]\\
 & =\left[\bm{u}_{1,\Omega}\cos t+\bm{h}_{\Omega}\sin t,\bm{u}_{2,\Omega},\cdots,\bm{u}_{r,\Omega}\right].\end{align*}
 For a given vector $\bm{x}\in\mathbb{R}^{m}$, denote $\mathcal{P}_{\Omega}\left(\bm{x}\right)$
by $\bm{x}_{\Omega}$. Define \[
\bm{x}_{\Omega,r}\left(t\right)=\bm{x}_{\Omega}-\mathcal{P}\left(\bm{x}_{\Omega},\bm{U}_{\Omega}\left(t\right)\right).\]
 The above expression simply specifies the projection residue vector
of $\bm{x}_{\Omega}$, where the projection is performed on the hyperplane
$\mbox{span}\left(\bm{U}_{\Omega}\left(t\right)\right)$. Note that
$\bm{x}_{\Omega,r}\left(t\right)$ is a function of $\bm{u}_{2,\Omega},\cdots,\bm{u}_{r,\Omega}$.

We would like to understand how $\bm{x}_{\Omega,r}\left(t\right)$
changes with $t$. Note that $\bm{u}_{2,\Omega},\cdots,\bm{u}_{r,\Omega}$
do not change with $t$. We shall find an expression of $\bm{x}_{\Omega,r}\left(t\right)$
that does not directly include $\bm{u}_{2,\Omega},\cdots,\bm{u}_{r,\Omega}$.
For this purpose, let \begin{align*}
\bm{x}_{r}^{\prime} & =\bm{x}_{\Omega}-\mathcal{P}\left(\bm{x}_{\Omega},\left[\bm{u}_{2,\Omega},\cdots,\bm{u}_{r,\Omega}\right]\right),\\
\bm{u}_{r} & =\bm{u}_{1,\Omega}-\mathcal{P}\left(\bm{u}_{1,\Omega},\left[\bm{u}_{2,\Omega},\cdots,\bm{u}_{r,\Omega}\right]\right),\;\mbox{and}\\
\bm{h}_{r} & =\bm{h}_{\Omega}-\mathcal{P}\left(\bm{h}_{\Omega},\left[\bm{u}_{2,\Omega},\cdots,\bm{u}_{r,\Omega}\right]\right).\end{align*}
 Let \[
\bm{u}_{r}\left(t\right)=\bm{u}_{r}\cos t+\bm{h}_{r}\sin t.\]
According to Proposition \ref{pro:projection-high-rank} in Appendix
\ref{sub:pf-atomic-f-periodic}, we have \[
\bm{x}_{\Omega,r}\left(t\right)=\bm{x}_{r}^{\prime}-\mathcal{P}\left(\bm{x}_{r}^{\prime},\bm{u}_{r}\left(t\right)\right).\]
 Note that $\bm{u}_{r}\left(t\right)$ has a simpler form compared
to $\bm{U}\left(t\right)$, and is therefore easier to analyze.

According to Proposition \ref{pro:atomic-f-periodic}, the function
$f_{\bm{x},\Omega}\left(t\right)=\left\Vert \bm{x}_{\Omega,r}\left(t\right)\right\Vert ^{2}$
is either a constant function or a periodic function with a unique
maximizer and minimizer in one period $\pi$. We are interested in
computing the unique maximizer and minimizer, denoted by $t_{\max}$
and $t_{\min}$ respectively, when the function is not constant. Apply
Proposition \ref{pro:atomic-fn-rank1} in Appendix \ref{sub:pf-atomic-f-periodic},
the following procedure generates the values of $t_{\max}$ and $t_{\min}$. 
\begin{enumerate}
\item Check whether

\begin{enumerate}
\item the vectors $\bm{u}_{r}$ and $\bm{h}_{r}$ are linearly \emph{dependent},
or 
\item the vector $\bm{x}_{r}$ is orthogonal to both $\bm{u}_{r}$ and $\bm{h}_{r}$. 
\end{enumerate}
If either of the above two properties holds, then $f_{\bm{x},\Omega}\left(t\right)$
is a constant function. Set $t_{\min}=t_{\max}=0$ and quit the procedure.

\item Let \[
\bm{c}=\left[\begin{array}{c}
c_{1}\\
c_{2}\end{array}\right]=\left[\bm{u}_{r},\bm{h}_{r}\right]^{\dagger}\bm{x}_{r},\]
 where the superscript $\dagger$, as before, denotes the pseudoinverse.
Define a mapping \begin{align}
\mbox{atan}:\;\mathbb{R}\times\mathbb{R} & \rightarrow\left[0,\pi\right)\nonumber \\
\left(x_{1},x_{2}\right) & \mapsto\begin{cases}
\pi/2\\
\qquad\mbox{if}\; x_{2}=0,\\
\mbox{tan}^{-1}\left(x_{1}/x_{2}\right)\\
\qquad\mbox{if}\; x_{2}\ne0\;\mbox{and}\; x_{1}/x_{2}\ge0,\\
\pi-\mbox{tan}^{-1}\left(-x_{1}/x_{2}\right)\\
\qquad\mbox{if}\; x_{2}\ne0\;\mbox{and}\; x_{1}/x_{2}<0.\end{cases}\label{eq:def-atan}\end{align}
 Then \[
t_{\max}=\mbox{atan}\left(c_{2},c_{1}\right).\]

\item The minimizer $t_{\min}$ is computed via\[
t_{\min}=\mbox{atan}\left(\bm{x}_{r}^{T}\bm{u}_{r},-\bm{x}_{r}^{T}\bm{h}_{r}\right).\]

\end{enumerate}

\section{\label{sec:Simulations}Performance Evaluation}

We tested the SET algorithm by randomly generating low-rank matrices
$\bm{X}$ and index sets $\Omega$. Specifically, we decomposed the
matrix $\bm{X}$ into $\bm{X}=\bm{U}_{\bm{X}}\bm{S}_{\bm{X}}\bm{V}_{\bm{X}}^{T}$,
where $\bm{U}_{\bm{X}}\in\mathcal{U}_{m,r}$, $\bm{V}_{\bm{X}}\in\mathcal{U}_{n,r}$,
and $\bm{S}_{\bm{X}}\in\mathbb{R}^{r\times r}$. We generated $\bm{U}_{\bm{X}}$
and $\bm{V}_{\bm{X}}$ from the isotropic distribution on the set
$\mathcal{U}_{m,r}$ and $\mathcal{U}_{n,r}$, respectively. The entries
of the $\bm{S}_{\bm{X}}$ matrix were independently drawn from the
standard Gaussian distribution $\mathcal{N}\left(0,1\right)$. This
step is important in order to guarantee randomness in the singular
values of $\bm{X}$. The index set $\Omega$ is also randomly generated
according to a uniform distribution over the set $\left\{ \Omega^{\prime}\subset\left[m\right]\times\left[n\right]:\;\left|\Omega^{\prime}\right|=k\right\} $,
for some constant $k$.

The performance of the SET algorithm is excellent, when compared to
the performance of other low-rank completion methods. We tested different
matrices with different ranks and different sampling rates, defined
as $\left|\Omega\right|/\left(m\times n\right)$. Fig. \ref{fig:improvement-transfer-step}
illustrates the performance improvement due to the subspace transfer
step. Significant gain is observed by integrating the subspace evolution
and subspace transfer steps. Fig. \ref{fig:SET-performance} shows
the performance of the SET algorithm for several choices of matrix
sizes and ranks. We also compare the SET algorithm to other matrix
completion algorithms%
\footnote{Though the SVT algorithm is not designed to solve the problem (P0),
we include it for completeness. In the standard SVT algorithm, there
is no explicit constraint on the rank of the \emph{reconstructed}
matrix. For fair comparison, we take the best rank-$r$ approximation
of the reconstructed matrix, and check whether it satisfies the performance
criterion.%
}. As shown in Figure \ref{fig:SET-performance-comparison}, the SET
algorithm outperforms all other tested completion approaches. One
unique property of the SET algorithm is that it works well in both
high sampling rate and low sampling rate regimes: in the high sampling
rate regime, the SET algorithm finds the unique low-rank solution;
in the low sampling rate regime, it finds one of the possibly multiple
low-rank solutions. Also note that there exists a region of sampling
rates for which the SET algorithm (actually all tested algorithms)
exhibits poor performance: the width and critical density of this
region depends on the matrix dimension and rank, and this regions
moves to the right as the rank increases. 

\begin{figure}[tbh]
\begin{centering}
\includegraphics[scale=0.65]{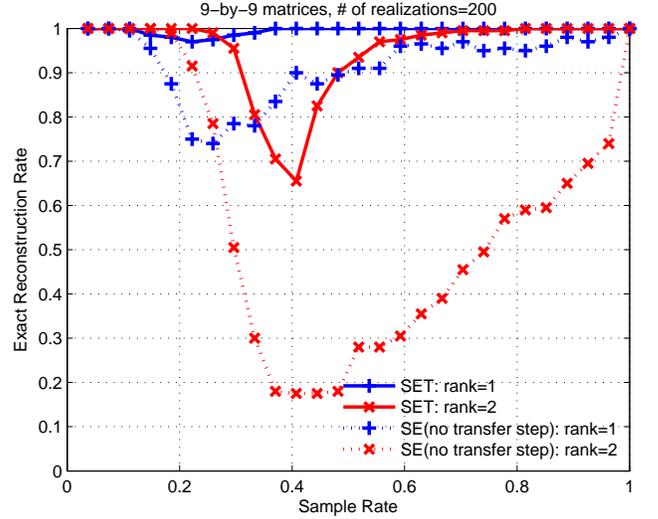}
\par\end{centering}

\caption{\label{fig:improvement-transfer-step}Performance improvement due
to the subspace transfer step.}

\end{figure}

\begin{figure}[tbh]
\begin{centering}
\includegraphics[scale=0.65]{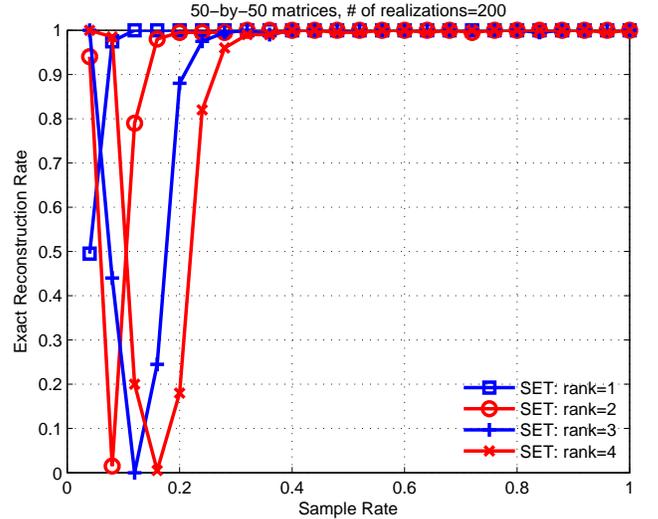}
\par\end{centering}

\caption{\label{fig:SET-performance}Performance of the SET algorithm.}

\end{figure}

\begin{figure}[tbh]
\begin{centering}
\includegraphics[scale=0.65]{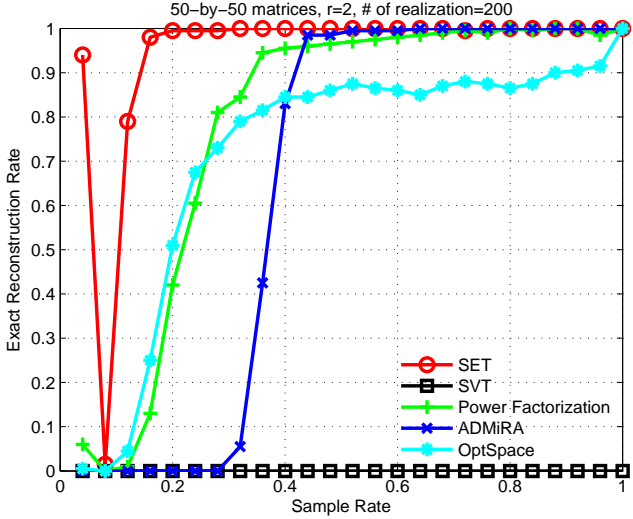} 
\par\end{centering}

\caption{\label{fig:SET-performance-comparison}Performance comparison.}

\end{figure}

Finally, we would like to comment on the complexity of the SET algorithm.
The computational complexity is related to the number of iterations
required for convergence. Since it incorporates a gradient descent
part, the SET algorithm inherits the general disadvantages of a gradient
descent approach: the algorithm may take a large number of iterations
to converge; within each iteration, finding the optimal step size
can be time consuming. Furthermore, extra computations are required
for the subspace transfer step. At the current stage, we do not have
an accurate analytical estimate of the computational complexity. 

\appendix

\subsection{\label{sub:pf-gradient}Proof of the form of the gradient in (\ref{eq:gradient})}

Let $\bm{F}_{\bm{U}}$ be the $m\times r$ matrix of partial derivatives,
i.e., $\left(\bm{F}_{\bm{U}}\right)_{i,j}=\partial f/\partial\bm{U}_{i,j}$.
We first write the objective function via the trace function: \begin{align*}
f & =\left\langle \mathcal{P}_{\Omega}\left(\bm{X}_{\Omega}-\bm{U}\bm{W}_{U}\right),\mathcal{P}_{\Omega}\left(\bm{X}_{\Omega}-\bm{U}\bm{W}_{U}\right)\right\rangle \\
 & \overset{\left(a\right)}{=}\left\langle \bm{X}-\bm{U}\bm{W}_{U},\mathcal{P}_{\Omega}^{*}\left(\mathcal{P}_{\Omega}\left(\bm{X}_{\Omega}-\bm{U}\bm{W}_{U}\right)\right)\right\rangle \\
 & \overset{\left(b\right)}{=}\left\langle \bm{X}-\bm{U}\bm{W}_{U},\mathcal{P}_{\Omega}\left(\bm{X}_{\Omega}-\bm{U}\bm{W}_{U}\right)\right\rangle ,\\
 & =\mbox{trace}\left(\left(\bm{X}-\bm{U}\bm{W}_{U}\right)^{T}\mathcal{P}_{\Omega}\left(\bm{X}_{\Omega}-\bm{U}\bm{W}_{U}\right)\right)\end{align*}
 where the symbol $\mathcal{P}_{\Omega}^{*}$ in $\left(a\right)$
denotes the adjoint operator of $\mathcal{P}_{\Omega}$. Equation
$\left(a\right)$ follows from the definition of the adjoint operator,
and equation $\left(b\right)$ holds because the operator $\mathcal{P}_{\Omega}$
is self-adjoint and idempotent. Note that \[
\frac{\partial f}{\partial\bm{U}_{i,j}}=\left.\frac{\partial f}{\partial\bm{U}_{i,j}}\right|_{\bm{W}_{U}}+\sum_{k,\ell}\left.\frac{\partial f}{\partial\left(\bm{W}_{U}\right)_{k,\ell}}\right|_{\bm{U}}\frac{\partial\left(\bm{W}_{U}\right)_{k,\ell}}{\partial\bm{U}_{i,j}}.\]
 Since $\bm{W}_{U}$ is the solution of the least square problem in
(\ref{eq:least-square-W}), we have \[
\left.\frac{\partial f}{\partial\left(\bm{W}_{U}\right)_{k,\ell}}\right|_{\bm{U}}=0,\;\mbox{for all}\;1\le k\le r\;\mbox{and}\;1\le\ell\le n.\]
 Therefore, \begin{align*}
\bm{F}_{\bm{U}} & =\frac{\partial f}{\partial\bm{U}}=\left.\frac{\partial f}{\partial\bm{U}}\right|_{\bm{W}_{U}}\\
 & =-2\mathcal{P}_{\Omega}\left(\bm{X}_{\Omega}-\bm{U}\bm{W}_{U}\right)\bm{W}_{U}^{T}=-2\bm{X}_{r}\bm{W}_{U}^{T}.\end{align*}

According to \cite[pg. 20]{edelman_optimization_manifolds_1998},
the corresponding tangent vector $\nabla_{\bm{U}}f$ (with respect
to the Grassmann manifold) is given by $\nabla_{\bm{U}}f=\bm{F}_{\bm{U}}-\bm{U}\bm{U}^{T}\bm{F}_{\bm{U}}.$
Since $\bm{W}_{U}$ minimizes the Frobenius norm, it is straightforward
to verify that $\bm{U}$ is orthogonal to $\bm{X}_{r}$, i.e., $\bm{U}^{T}\bm{X}_{r}=\bm{0}$.
Therefore, $\nabla_{\bm{U}}f=\bm{F}_{\bm{U}}=-2\bm{X}_{r}\bm{W}_{U}^{T}$
which proves (\ref{eq:gradient}).

\subsection{\label{sub:pf-decoupling}Proof of the decoupling principle in (\ref{eq:f-decoupling})}

Arbitrarily pick a $\bm{U}\in\mathbb{R}^{m\times r}$. For the matrix
$\bm{X}_{\Omega}$, the objective function $\left\Vert \bm{X}_{\Omega}-\mathcal{P}_{\Omega}\left(\bm{U}\bm{W}\right)\right\Vert _{F}^{2}$
is convex in $\bm{W}$. Let $\bm{W}^{\left(0\right)}$ be a global
minimizer for this function. For each column of $\bm{X}_{\Omega}$,
say $\bm{x}_{\Omega_{j}}$, the function $\left\Vert \bm{x}_{\Omega_{j}}-\mathcal{P}_{\Omega_{j}}\left(\bm{U}\bm{W}_{:,j}\right)\right\Vert ^{2}$
is also convex. Let $\bm{W}_{:j}^{\left(1\right)}$ now be the global
minimizer for this $j^{th}$ atomic function. Concatenate $\bm{W}_{:1}^{\left(1\right)},\cdots,\bm{W}_{:n_{2}}^{\left(1\right)}$
into a matrix and denote the resulting matrix by $\bm{W}^{\left(1\right)}$.
By the additivity of the squared Frobenius norm, the right side of
(\ref{eq:f-decoupling}) becomes $\left\Vert \bm{X}_{\Omega}-\mathcal{P}_{\Omega}\left(\bm{U}\bm{W}^{\left(1\right)}\right)\right\Vert _{F}^{2}$.
By the definition of $\bm{W}^{\left(0\right)}$, $\left\Vert \bm{X}_{\Omega}-\mathcal{P}_{\Omega}\left(\bm{U}\bm{W}^{\left(0\right)}\right)\right\Vert _{F}^{2}\le\left\Vert \bm{X}_{\Omega}-\mathcal{P}_{\Omega}\left(\bm{U}\bm{W}^{\left(1\right)}\right)\right\Vert _{F}^{2}$.
On the other hand, \begin{align*}
\left\Vert \bm{X}_{\Omega}-\mathcal{P}_{\Omega}\left(\bm{U}\bm{W}^{\left(1\right)}\right)\right\Vert _{F}^{2} & =\sum_{j=1}^{n_{2}}\left\Vert \bm{x}_{\Omega_{j}}-\mathcal{P}_{\Omega_{j}}\left(\bm{U}\bm{W}_{:j}^{\left(1\right)}\right)\right\Vert _{F}^{2}\\
 & \le\sum_{j=1}^{n_{2}}\left\Vert \bm{x}_{\Omega_{j}}-\mathcal{P}_{\Omega_{j}}\left(\bm{U}\bm{W}_{:j}^{\left(0\right)}\right)\right\Vert _{F}^{2}\\
 & =\left\Vert \bm{X}_{\Omega}-\mathcal{P}_{\Omega}\left(\bm{U}\bm{W}^{\left(0\right)}\right)\right\Vert _{F}^{2}.\end{align*}
 This proves equation (\ref{eq:f-decoupling}).

\subsection{\label{sub:consistency}Determination of Consistency}

Let $\bm{G}=\left.\nabla_{\bm{U}}f\right|_{\bm{U}\left(t_{\max,k}\right)}$
be the gradient of $f$ at $\bm{U}\left(t_{\max,k}\right)$. It can
be computed via (\ref{eq:gradient}). Consider the geodesic curve
in (\ref{eq:geodesic-curve-rank1}). Define \[
\bm{H}\left(t\right)=\left[-\bm{u}_{1}\sin t+\bm{h}\cos t,\bm{0},\cdots,\bm{0}\right]\;\mbox{for}\; t\in\left[0,\pi\right).\]
 It can be shown that $\bm{H}\left(t_{\max,k}\right)$ is the parallel
transportation of $\bm{H}$ at $t_{\max,k}$ (see \cite[pg. 19]{edelman_optimization_manifolds_1998}
for more details). Based on the definition of the gradient, it can
be shown that $\frac{d}{dt}f\left(\bm{U}\left(t\right)\right)<0$
if and only if \[
\left\langle \bm{G},\bm{H}\left(t_{\max,k}\right)\right\rangle =\bm{G}^{T}\bm{H}\left(t_{\max,k}\right)<0.\]

\subsection{\label{sub:pf-atomic-f-periodic}Proof of Proposition \ref{pro:atomic-f-periodic}}

This subsection presents the proof of Proposition \ref{pro:atomic-f-periodic}
and the mechanism in Section \ref{sub:compute-tmin-tmax} for computing
$t_{\max}$ and $t_{\min}$. We first study the case $r=1$ and then
extend the results to the general case where $r>1$.

In the rank-one case, the geodesic curve has the form $\bm{U}\left(t\right)=\bm{u}\cos t+\bm{h}\sin t$,
with $t\in\left[0,\pi\right)$. For some $\Omega\subset\left[m\right]$,
an atomic function can be written as $\left\Vert \bm{x}_{\Omega}-\mathcal{P}\left(\bm{x}_{\Omega},\bm{u}_{\Omega}\cos t+\bm{h}_{\Omega}\sin t\right)\right\Vert ^{2}$,
where $\bm{u}_{\Omega}=\mathcal{P}_{\Omega}\left(\bm{u}\right)$ and
$\bm{h}_{\Omega}=\mathcal{P}_{\Omega}\left(\bm{h}\right)$. Note that
$\bm{u}_{\Omega}$ may not be of unit norm. For notational convenience,
we drop the subscript $\Omega$. The following proposition describes
the general behavior of an atomic function. 
\begin{prop}
\label{pro:atomic-fn-rank1}Let $\bm{y},\bm{u}_{1},\bm{u}_{2}\in\mathbb{R}^{m}$.
Suppose that 
\begin{enumerate}
\item The vectors $\bm{u}_{1}$ and $\bm{u}_{2}$ are linearly independent. 
\item The vector $\bm{y}$ is not orthogonal to both $\bm{u}_{1}$ and $\bm{u}_{2}$
simultaneously. 
\end{enumerate}
Let $\bm{u}\left(t\right)=\bm{u}_{1}\cos t+\bm{u}_{2}\sin t$ where
$t\in\mathbb{R}$. Define $\bm{y}_{r}\left(t\right)=\bm{y}-\mathcal{P}\left(\bm{y},\bm{u}\left(t\right)\right)$
and $f\left(t\right)=\left\Vert \bm{y}_{r}\left(t\right)\right\Vert ^{2}$.
Then the following is true. 
\begin{enumerate}
\item $f\left(t\right)$ is a periodic function with period $\pi$. 
\item \label{enu:tmin_tmax}$f\left(t\right)$ has a unique minimizer $t_{\min}$
and a unique maximizer $t_{\max}$. 
\item The maximizer $t_{\max}$ defined in \ref{enu:tmin_tmax}) can be
computed in the following way. Let $\bm{c}=\left[c_{1},c_{2}\right]^{T}=\mbox{coeff}\left(\bm{y},\left[\bm{u}_{1},\bm{u}_{2}\right]\right)$.
Then $t_{\max}=\mbox{atan}\left(c_{2},c_{1}\right)$, where the atan
function is defined in (\ref{eq:def-atan}). 
\item The minimizer $t_{\min}$ defined in \ref{enu:tmin_tmax}) is computed
via $t_{\min}=\mbox{atan}\left(\bm{y}^{T}\bm{u}_{i},-\bm{y}^{T}\bm{u}_{2}\right)$. 
\end{enumerate}
\end{prop}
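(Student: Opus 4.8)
The plan is to reduce everything to an explicit one-variable computation. First I would eliminate the projection by writing $\mathcal{P}(\bm{y},\bm{u}(t)) = \frac{\bm{u}(t)\bm{u}(t)^T}{\|\bm{u}(t)\|^2}\bm{y}$, so that
\begin{equation*}
f(t) = \|\bm{y}\|^2 - \frac{(\bm{y}^T\bm{u}(t))^2}{\|\bm{u}(t)\|^2}.
\end{equation*}
Hypothesis 1 (linear independence of $\bm{u}_1,\bm{u}_2$) guarantees $\bm{u}(t)\neq\bm{0}$ for all $t$, so the denominator never vanishes and $f$ is well defined and smooth on $\mathbb{R}$. Periodicity with period $\pi$ is then immediate because both $\bm{u}(t)\bm{u}(t)^T$ and $\|\bm{u}(t)\|^2$ are unchanged when $t\mapsto t+\pi$ (the sign flip of $\bm{u}(t)$ cancels). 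So claim 1 costs almost nothing.

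For claims 2--4 I would expand the numerator and denominator in terms of $t$. Writing $a_{ij}=\bm{u}_i^T\bm{u}_j$ and $b_i = \bm{y}^T\bm{u}_i$, we get $\|\bm{u}(t)\|^2 = a_{11}\cos^2 t + 2a_{12}\sin t\cos t + a_{22}\sin^2 t$ and $\bm{y}^T\bm{u}(t) = b_1\cos t + b_2\sin t$. Using the double-angle identities these become affine functions of $(\cos 2t,\sin 2t)$ plus, in the numerator, a pure $(\cos 2t,\sin 2t)$ oscillation; concretely $(\bm{y}^T\bm{u}(t))^2 = \tfrac12(b_1^2+b_2^2) + \tfrac12(b_1^2-b_2^2)\cos 2t + b_1 b_2\sin 2t$ and $\|\bm{u}(t)\|^2 = \tfrac12(a_{11}+a_{22}) + \tfrac12(a_{11}-a_{22})\cos 2t + a_{12}\sin 2t$. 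Thus $g(t):=1/f(t)$ — or, cleaner, the quantity $\|\bm{y}\|^2 - f(t) = (\bm{y}^T\bm{u}(t))^2/\|\bm{u}(t)\|^2$ — is a ratio of two trigonometric polynomials of degree one in the variable $2t$. Differentiating and setting the derivative to zero gives a single linear equation in $(\cos 2t, \sin 2t)$, hence (generically) a unique solution for $2t$ modulo $2\pi$, i.e. a unique $t$ modulo $\pi$; a short sign/second-derivative check separates which critical point is the max and which the min, giving claim 2. Hypothesis 2 ($\bm{y}$ not orthogonal to both $\bm{u}_1,\bm{u}_2$) is exactly what makes the numerator not identically zero, so that $f$ is genuinely non-constant and the critical equation is non-degenerate.

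For claims 3 and 4 I would make the change of basis explicit. Let $\bm{c}=(c_1,c_2)^T$ be the least-squares coefficients of $\bm{y}$ in $[\bm{u}_1,\bm{u}_2]$, i.e. $\bm{c}=[\bm{u}_1,\bm{u}_2]^\dagger\bm{y}$; equivalently $c$ solves $A\bm{c}=\bm{b}$ with $A=(a_{ij})$, $\bm{b}=(b_i)$. The key algebraic fact is that in the (non-orthonormal) basis $\{\bm{u}_1,\bm{u}_2\}$ of the plane $V=\mathrm{span}(\bm{u}_1,\bm{u}_2)$, the angle $t$ for which $\bm{u}(t)$ is \emph{parallel} to the projection $\mathcal{P}(\bm{y},V)=c_1\bm{u}_1+c_2\bm{u}_2$ is precisely $t$ with $\cos t : \sin t = c_1 : c_2$, i.e. $t=\mathrm{atan}(c_2,c_1)$; and that is the value of $t$ maximizing $(\bm{y}^T\bm{u}(t))^2/\|\bm{u}(t)\|^2$, because along the curve $\bm{u}(t)$ sweeps all lines in $V$ and the projected energy is largest when $\bm{u}(t)$ aligns with the in-plane component of $\bm{y}$. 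This gives claim 3. For claim 4 ($t_{\min}$), the minimizer is the $t$ making $\bm{u}(t)\perp\bm{y}$, i.e. $b_1\cos t + b_2\sin t = 0$, which is $\cos t:\sin t = b_2 : -b_1 = \bm{y}^T\bm{u}_2 : -\bm{y}^T\bm{u}_1$, that is $t_{\min}=\mathrm{atan}(\bm{y}^T\bm{u}_1, -\bm{y}^T\bm{u}_2)$ (I would double-check the argument order against the $\mathrm{atan}$ convention in~\eqref{eq:def-atan}); when $\bm{y}\in V$ this makes $f(t_{\min})=0$, and in general it still gives the smallest residual. The main obstacle is the bookkeeping in the geometric identification of $t_{\max}$: one must verify carefully that $\mathrm{atan}(c_2,c_1)$ — built from the \emph{oblique} coordinates $c_i$ of $\bm{y}$ rather than the orthogonal components $b_i$ — really picks out the line in $V$ spanned by $\mathcal{P}(\bm{y},V)$, and that this line, not its orthogonal complement within $V$, is the maximizer; matching this to the $[0,\pi)$ branch choices in the $\mathrm{atan}$ map is where sign errors would creep in. Everything else is routine single-variable calculus.
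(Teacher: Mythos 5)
Your overall route is the same as the paper's: identify the curve $\bm{u}\left(t\right)$, $t\in\left[0,\pi\right)$, with the set of lines through the origin in $\mbox{span}\left(\left[\bm{u}_{1},\bm{u}_{2}\right]\right)$, and locate the extremizers at the line aligned with the in-plane component $\bm{y}_{p}=c_{1}\bm{u}_{1}+c_{2}\bm{u}_{2}$ and at the line orthogonal to $\bm{y}$. Your only real departure is in claim 2: where the paper argues uniqueness geometrically (uniqueness of $\bm{y}_{p}$, and of the single line in the plane orthogonal to it), you reduce $f^{\prime}\left(t\right)=0$ to an equation affine in $\left(\cos2t,\sin2t\right)$. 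That computation is fine (the Wronskian-type term is constant, as you need), but such an equation has \emph{two} roots $2t$ per period $2\pi$ in the non-degenerate case, not one; you in fact need both, since one is the maximizer and one the minimizer, so "a unique solution for $2t$ modulo $2\pi$" should read "at most two critical points per period $\pi$, hence exactly one maximum and one minimum."

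The genuine problem is an internal inconsistency in your claims 3 and 4. Your own opening identity
\[
f\left(t\right)=\left\Vert \bm{y}\right\Vert ^{2}-\frac{\left(\bm{y}^{T}\bm{u}\left(t\right)\right)^{2}}{\left\Vert \bm{u}\left(t\right)\right\Vert ^{2}}
\]
says that $f$ is \emph{smallest} where the projected energy is largest. You correctly show (Cauchy--Schwarz) that the projected energy is maximized at $t=\mbox{atan}\left(c_{2},c_{1}\right)$, i.e.\ at alignment with $\bm{y}_{p}$, and vanishes at the $t$ with $\bm{u}\left(t\right)\perp\bm{y}$ --- but you then hand these two points to claims 3 and 4 with the labels unswapped, declaring the alignment point the \emph{maximizer} of $f$ and the orthogonality point the \emph{minimizer}. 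Your own sanity check exposes the contradiction: if $\bm{u}\left(t_{\min}\right)\perp\bm{y}$ then $\mathcal{P}\left(\bm{y},\bm{u}\left(t_{\min}\right)\right)=\bm{0}$ and $f\left(t_{\min}\right)=\left\Vert \bm{y}\right\Vert ^{2}$, the largest possible value, not $0$ as you assert for $\bm{y}\in\mbox{span}\left(\left[\bm{u}_{1},\bm{u}_{2}\right]\right)$. To be fair, the paper's own proof makes the identical identification ("$f\left(t\right)$ is maximized when $\bm{u}\left(t\right)$ is aligned with $\bm{y}_{p}$"), so your conclusions agree with the proposition as printed; but the worked example of Section \ref{sub:Barrier-example}, where the atomic function attains its maximum $18$ at the orthogonal configuration and its minimum $0$ at the aligned one, sides with your displayed formula rather than with the printed labels. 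You cannot leave the identity and the max/min assignments contradicting each other: commit to the residual interpretation, swap the roles of the two formulas accordingly (or flag the discrepancy with the statement), and remove the false check $f\left(t_{\min}\right)=0$.
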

\begin{proof}
This first part is proved by observing that $\bm{u}\left(t+\pi\right)=-\bm{u}\left(t\right)$.
Note that for a given $t$, \[
\bm{y}_{r}\left(t\right)=\bm{y}-\left(\bm{y}^{T}\bm{u}\left(t\right)/\left\Vert \bm{u}\left(t\right)\right\Vert ^{2}\right)\bm{u}\left(t\right).\]
 One has \begin{align*}
\bm{y}_{r}\left(t+\pi\right) & =\bm{y}-\left(\bm{y}^{T}\bm{u}\left(t+\pi\right)/\left\Vert \bm{u}\left(t+\pi\right)\right\Vert ^{2}\right)\bm{u}\left(t+\pi\right)\\
 & =\bm{y}-\left(-\bm{y}^{T}\bm{u}\left(t\right)/\left\Vert \bm{u}\left(t\right)\right\Vert ^{2}\right)\left(-\bm{u}\left(t\right)\right)\\
 & =\bm{y}_{r}\left(t\right).\end{align*}

The other claims of this proposition are proved as follows. By assumption,
$\bm{u}_{1}$ and $\bm{u}_{2}$ are linearly independent. As a result,
$\mbox{span}\left(\left[\bm{u}_{1},\bm{u}_{2}\right]\right)$ is a
hyperplane with dimension two. It is clear that $\bm{u}\left(t\right)=\bm{u}_{1}\cos t+\bm{u}_{2}\sin t\ne0$
for all $t\in\mathbb{R}$ and it forms an ellipse on the hyperplane
$\mbox{span}\left(\left[\bm{u}_{1},\bm{u}_{2}\right]\right)$ centered
at 0. Any line in the hyperplane $\mbox{span}\left(\left[\bm{u}_{1},\bm{u}_{2}\right]\right)$
through the origin can be uniquely represented by a point on the half
ellipse $\bm{u}\left(t\right)$ with $t\in\left[0,\pi\right)$: that
is, for all unit vector $\bm{u}^{\prime}\in\mbox{span}\left(\left[\bm{u}_{1},\bm{u}_{2}\right]\right)$,
there exists a unique $t\in\left[0,\pi\right)$ and an $s\in\mathbb{R}$
such that $\bm{u}^{\prime}s=\bm{u}\left(t\right)$. In other words,
the half ellipse $\bm{u}\left(t\right)$ with $t\in\left[0,\pi\right)$
presents all possible lines (through the origin) in the hyperplane
$\mbox{span}\left(\left[\bm{u}_{1},\bm{u}_{2}\right]\right)$.

Let $\bm{y}_{p}$ be the projection of $\bm{y}$ on the hyperplane
$\mbox{span}\left(\left[\bm{u}_{1},\bm{u}_{2}\right]\right)$, i.e.,
$\bm{y}_{p}=\mbox{proj}\left(\bm{y},\left[\bm{u}_{1},\bm{u}_{2}\right]\right)$.
It is clear that $f\left(t\right)$ is maximized when $\bm{u}\left(t\right)$
is aligned with $\bm{y}_{p}$: this means, there exists a constant
$c\in\mathbb{R}$ such that $\bm{u}\left(t\right)=c\bm{y}_{p}$. By
the definition of the projection, we have $\bm{y}_{p}=\left[\bm{u}_{1},\bm{u}_{2}\right]\bm{c}=\bm{u}_{1}c_{1}+\bm{u}_{2}c_{2}$.
Therefore, $t_{\max}=\mbox{atan}\left(c_{2},c_{1}\right)$.

The function $f\left(t\right)$ is minimized when $\bm{u}\left(t\right)$
is orthogonal to $\bm{y}$. We have $\bm{y}^{T}\bm{u}_{1}\cos t_{\min}+\bm{y}^{T}\bm{u}_{2}\sin t_{\min}=0$.
Solving this equation proves part 4.

We prove the \emph{uniqueness} results next. By assumption, $\bm{y}$
is not orthogonal to both $\bm{u}_{1}$ and $\bm{u}_{2}$ simultaneously.
Hence, $\bm{y}_{p}\ne0$. Furthermore, since $\bm{u}_{1}$ and $\bm{u}_{2}$
are linearly independent, the vector $\bm{y}_{p}$ is uniquely defined.
This establishes the uniqueness of $t_{\max}$. Since the dimension
of the hyperplane $\mbox{span}\left(\left[\bm{u}_{1},\bm{u}_{2}\right]\right)$
is two, there exists a unique line in $\mbox{span}\left(\left[\bm{u}_{1},\bm{u}_{2}\right]\right)$
to be orthogonal to $\bm{y}_{p}\in\mbox{span}\left(\left[\bm{u}_{1},\bm{u}_{2}\right]\right)$.
We denote this line by a vector $\bm{y}_{\perp}\ne0$, such that $\bm{y}_{\perp}\in\mbox{span}\left(\left[\bm{u}_{1},\bm{u}_{2}\right]\right)$
and $\bm{y}_{\perp}^{T}\bm{y}_{p}=0$. First, $\bm{y}_{\perp}$ is
orthogonal to $\bm{y}$. This can be easily verified as $\bm{y}=\bm{y}_{p}+\bm{y}_{r}$,
where $\bm{y}_{r}$ is the projection residue vector and therefore
is orthogonal to $\bm{y}_{\perp}$ as well. Second, any linear combination
of $\bm{y}_{\perp}$ and $\bm{y}_{p}$ such that the coefficient of
$\bm{y}_{p}$ is nonzero produces a line that is not orthogonal to
$\bm{y}$. Therefore, $\bm{y}_{\perp}$ represents the unique line
in $\mbox{span}\left(\left[\bm{u}_{1},\bm{u}_{2}\right]\right)$ that
is orthogonal to $\bm{y}$. The corresponding value $t_{\min}$ is
therefore unique. 
\end{proof}
\vspace{0.01in}

We proceed next with the general case where $r\ge1$. Recall the expression
for the geodesic curve in (\ref{eq:geodesic-curve-rank1}). Denote
$\mathcal{P}_{\Omega}\left(\bm{h}\right)$ by $\bm{h}_{\Omega}$.
Similarly, we have $\bm{u}_{1,\Omega},\cdots,\bm{u}_{r,\Omega}$.
Let $\bm{u}_{1,\Omega}\left(t\right)=\bm{u}_{1,\Omega}\cos t+\bm{h}_{\Omega}\sin t$.
The atomic function can be written as \[
f\left(t\right)=\left\Vert \bm{x}_{\Omega}-\mathcal{P}\left(\bm{x}_{\Omega},\left[\bm{u}_{1,\Omega}\left(t\right),\bm{u}_{2,\Omega},\cdots,\bm{u}_{r,\Omega}\right]\right)\right\Vert _{F}^{2}.\]
 Again we drop the subscript $\Omega$ for convenience. The following
proposition is the key to understand the relationship between $\mathcal{P}\left(\bm{x},\bm{u}_{1}\left(t\right)\right)$
and $\mathcal{P}\left(\bm{x},\left[\bm{u}_{1}\left(t\right),\bm{u}_{2},\cdots,\bm{u}_{r}\right]\right)$. 
\begin{prop}
\label{pro:projection-high-rank}Let $\bm{y}\in\mathbb{R}^{m}$, $\bm{U}_{1}\in\mathbb{R}^{m\times n_{1}}$
and $\bm{U}_{2}\in\mathbb{R}^{m\times n_{2}}$ where $n_{1},n_{2}\in\left[m\right]$.
Let \[
\bm{y}_{r}=\bm{y}-\mathcal{P}\left(\bm{y},\left[\bm{U}_{1},\bm{U}_{2}\right]\right).\]
 Denote the $j^{th}$ column of $\bm{U}_{2}$ by $\left(\bm{U}_{2}\right)_{:j}$.
Then $\bm{y}_{r}$ can be written as \[
\bm{y}_{r}=\bm{y}_{r,1}-\mathcal{P}\left(\bm{y}_{r,1},\bm{U}_{2,r}\right),\]
 where $\bm{y}_{r,1}=\mathcal{P}\left(\bm{y},\bm{U}_{1}\right)$,
and $\bm{U}_{2,r}=\left[\left(\bm{U}_{2}\right)_{:1}-\mathcal{P}\left(\left(\bm{U}_{2}\right)_{:1},\bm{U}_{1}\right),\cdots,\left(\bm{U}_{2}\right)_{:r}-\mathcal{P}\left(\left(\bm{U}_{2}\right)_{:r},\bm{U}_{1}\right)\right]$. \end{prop}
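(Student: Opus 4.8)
The plan is to prove the standard two-stage (Gram--Schmidt, or ``Frisch--Waugh--Lovell'') decomposition of an orthogonal projection: projecting onto $\mbox{span}\left(\left[\bm{U}_{1},\bm{U}_{2}\right]\right)$ is the same as first removing the $\mbox{span}\left(\bm{U}_{1}\right)$ component and then removing the component along the part of $\bm{U}_{2}$ orthogonal to $\bm{U}_{1}$. Because $\bm{U}_{1},\bm{U}_{2}$ need not have full column rank, I would argue throughout with orthogonal projection \emph{operators} and residual \emph{vectors} (both of which are unique), never with coefficient vectors. Write $P_{1}=\bm{U}_{1}\bm{U}_{1}^{\dagger}$ for the orthogonal projector onto $\mbox{span}\left(\bm{U}_{1}\right)$ and $Q_{1}=\bm{I}-P_{1}$, so that $\mathcal{P}\left(\bm{y},\bm{U}_{1}\right)=P_{1}\bm{y}$, the residue vector $\bm{y}_{r,1}=Q_{1}\bm{y}$, and $\bm{U}_{2,r}=Q_{1}\bm{U}_{2}$; likewise let $P_{2,r}=\bm{U}_{2,r}\bm{U}_{2,r}^{\dagger}$ be the projector onto $\mbox{span}\left(\bm{U}_{2,r}\right)$, so that $\mathcal{P}\left(\bm{y}_{r,1},\bm{U}_{2,r}\right)=P_{2,r}\bm{y}_{r,1}$. (Here $\bm{y}_{r,1}$ is read as the projection \emph{residue} $\bm{y}-\mathcal{P}\left(\bm{y},\bm{U}_{1}\right)$, consistent with its use in Section~\ref{sub:compute-tmin-tmax}.)

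The argument proceeds in three short steps. First, I would establish $\mbox{span}\left(\left[\bm{U}_{1},\bm{U}_{2}\right]\right)=\mbox{span}\left(\bm{U}_{1}\right)\oplus\mbox{span}\left(\bm{U}_{2,r}\right)$ with the sum \emph{orthogonal}: the identity $\bm{U}_{1}\bm{a}+\bm{U}_{2}\bm{b}=\left(\bm{U}_{1}\bm{a}+P_{1}\bm{U}_{2}\bm{b}\right)+Q_{1}\bm{U}_{2}\bm{b}$ writes an arbitrary element of the left-hand side as an element of $\mbox{span}\left(\bm{U}_{1}\right)$ plus an element of $\mbox{span}\left(Q_{1}\bm{U}_{2}\right)=\mbox{span}\left(\bm{U}_{2,r}\right)\subseteq\mbox{span}\left(\bm{U}_{1}\right)^{\perp}$, and the reverse inclusion is immediate. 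Second, since this direct sum is orthogonal, $P_{1}+P_{2,r}$ is symmetric, idempotent (using $P_{1}P_{2,r}=P_{2,r}P_{1}=\bm{0}$), and has range equal to $\mbox{span}\left(\left[\bm{U}_{1},\bm{U}_{2}\right]\right)$, hence equals the orthogonal projector onto that span; therefore $\mathcal{P}\left(\bm{y},\left[\bm{U}_{1},\bm{U}_{2}\right]\right)=P_{1}\bm{y}+P_{2,r}\bm{y}$ and $\bm{y}_{r}=Q_{1}\bm{y}-P_{2,r}\bm{y}$. Third, $\mbox{range}\left(P_{2,r}\right)\perp\mbox{range}\left(P_{1}\right)$ gives $P_{2,r}P_{1}=\bm{0}$, so $P_{2,r}=P_{2,r}\left(P_{1}+Q_{1}\right)=P_{2,r}Q_{1}$; hence $P_{2,r}\bm{y}=P_{2,r}Q_{1}\bm{y}=P_{2,r}\bm{y}_{r,1}=\mathcal{P}\left(\bm{y}_{r,1},\bm{U}_{2,r}\right)$, and substituting into the previous expression yields $\bm{y}_{r}=\bm{y}_{r,1}-\mathcal{P}\left(\bm{y}_{r,1},\bm{U}_{2,r}\right)$.

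An equivalent, purely variational route uses the characterization of $\mathcal{P}$ as a least-squares minimizer: $\bm{y}_{r}$ is the unique residual of $\min_{\bm{a},\bm{b}}\left\Vert \bm{y}-\bm{U}_{1}\bm{a}-\bm{U}_{2}\bm{b}\right\Vert ^{2}$, and minimizing over $\bm{a}$ first replaces the objective by $\left\Vert Q_{1}\left(\bm{y}-\bm{U}_{2}\bm{b}\right)\right\Vert ^{2}=\left\Vert \bm{y}_{r,1}-\bm{U}_{2,r}\bm{b}\right\Vert ^{2}$; at any joint minimizer $\left(\bm{a}^{*},\bm{b}^{*}\right)$ the residual is $Q_{1}\left(\bm{y}-\bm{U}_{2}\bm{b}^{*}\right)=\bm{y}_{r,1}-\bm{U}_{2,r}\bm{b}^{*}$, which is precisely the residual of $\min_{\bm{b}}\left\Vert \bm{y}_{r,1}-\bm{U}_{2,r}\bm{b}\right\Vert ^{2}$. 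The only point that needs any care is the rank-deficiency bookkeeping noted above: once one commits to working with projection operators and residual vectors rather than coefficient vectors, each individual step is a one-line verification, so I do not anticipate a genuine obstacle.
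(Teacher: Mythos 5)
Your proof is correct and follows essentially the same route as the paper's: both rest on the orthogonal decomposition $\mbox{span}\left(\left[\bm{U}_{1},\bm{U}_{2}\right]\right)=\mbox{span}\left(\bm{U}_{1}\right)\oplus\mbox{span}\left(\bm{U}_{2,r}\right)$, with the paper verifying the two defining properties of the residual vector $\bm{y}_{r}^{\prime}$ directly while you verify the same facts at the level of the projector $P_{1}+P_{2,r}$. You are also right to read $\bm{y}_{r,1}$ as the residue $\bm{y}-\mathcal{P}\left(\bm{y},\bm{U}_{1}\right)$ rather than the projection as literally written; the paper's own proof (which uses $\bm{y}_{r,1}\perp\bm{U}_{1}$ and $\bm{y}-\bm{y}_{r,1}=\mathcal{P}\left(\bm{y},\bm{U}_{1}\right)$) and the application in Section~\ref{sub:compute-tmin-tmax} confirm that the statement contains a typo there.
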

\begin{proof}
The proof is centered around the notion of projection. For arbitrary
$\bm{y}\in\mathbb{R}^{m}$ and $\bm{U}\in\mathbb{R}^{m\times n}$,
an operator $\mathcal{P}$ is a projection operator if and only if
$\mathcal{P}\left(\bm{y},\bm{U}\right)\in\mbox{span}\left(\bm{U}\right)$
and $\bm{y}_{r}\perp\bm{U}$, where $\bm{y}_{r}=\bm{y}-\mathcal{P}\left(\bm{y},\bm{U}\right)$.
We say $\bm{y}_{r}\perp\bm{U}$ if $\bm{y}_{r}^{T}\bm{U}_{:j}=0$
for all $j\in\left[n\right]$.

Let $\bm{y}^{\prime}=\bm{y}_{r,1}-\mathcal{P}\left(\bm{y}_{r,1},\bm{U}_{2,r}\right)$.
To prove this proposition, it suffices to show that $\bm{y}_{r}^{\prime}\perp\left[\bm{U}_{1},\bm{U}_{2}\right]$
and $\bm{y}-\bm{y}_{r}^{\prime}\in\mbox{span}\left(\left[\bm{U}_{1},\bm{U}_{2}\right]\right)$.

We first show that $\bm{y}_{r}^{\prime}\perp\left[\bm{U}_{1},\bm{U}_{2}\right]$.
That $\bm{y}_{r}^{\prime}\perp\bm{U}_{1}$ is verified as follows.
Since $\mathcal{P}\left(\bm{y}_{r,1},\bm{U}_{2,r}\right)\in\mbox{span}\left(\bm{U}_{2,r}\right)$
and each column of $\bm{U}_{2,r}$ is orthogonal to $\bm{U}_{1}$,
we have $\mathcal{P}\left(\bm{y}_{r,1},\bm{U}_{2,r}\right)\perp\bm{U}_{1}$.
The definition of $\bm{y}_{r,1}$ implies that $\bm{y}_{r,1}\perp\bm{U}_{1}$.
Hence, we have $\bm{y}_{r}^{\prime}\perp\bm{U}_{1}$ as the vector
$\bm{y}_{r}^{\prime}$ is a linear combination of $\bm{y}_{r,1}$
and $\mathcal{P}\left(\bm{y}_{r,1},\bm{U}_{2,r}\right)$. We claim
that $\bm{y}_{r}^{\prime}\perp\bm{U}_{2}$ as well. According to the
definition of $\bm{y}_{r}^{\prime}$, it is clear that $\bm{y}_{r}^{\prime}\perp\bm{U}_{2,r}$.
Note that $\left(\bm{U}_{2}\right)_{:j}=\left(\bm{U}_{2,r}\right)_{:j}+\mathcal{P}\left(\left(\bm{U}_{2}\right)_{:j},\bm{U}_{1}\right)$.
The vector $\mathcal{P}\left(\left(\bm{U}_{2}\right)_{:j},\bm{U}_{1}\right)$
is in the $\mbox{span}\left(\bm{U}_{1}\right)$ and therefore orthogonal
to $\bm{y}_{r}^{\prime}$. As a result, $\bm{y}_{r}^{\prime}\perp\bm{U}_{2}$.
We then have $\bm{y}_{r}^{\prime}\perp\left[\bm{U}_{1},\bm{U}_{2}\right]$.

Next, we show that $\bm{y}-\bm{y}_{r}^{\prime}\in\mbox{span}\left(\left[\bm{U}_{1},\bm{U}_{2}\right]\right)$.
Note that \begin{align*}
\bm{y}-\bm{y}_{r}^{\prime} & =\bm{y}-\bm{y}_{r,1}+\mathcal{P}\left(\bm{y}_{r,1},\bm{U}_{2,r}\right)\\
 & =\mathcal{P}\left(\bm{y},\bm{U}_{1}\right)+\mathcal{P}\left(\bm{y}_{r,1},\bm{U}_{2,r}\right).\end{align*}
 Clearly, $\mathcal{P}\left(\bm{y},\bm{U}_{1}\right)\in\mbox{span}\left(\bm{U}_{1}\right)\subset\mbox{span}\left(\left[\bm{U}_{1},\bm{U}_{2}\right]\right)$.
Furthermore, according to the definition of $\bm{U}_{2,r}$, $\mbox{span}\left(\bm{U}_{2,r}\right)\subset\mbox{span}\left(\left[\bm{U}_{1},\bm{U}_{2}\right]\right)$
and therefore $\mathcal{P}\left(\bm{y}_{r,1},\bm{U}_{2,r}\right)\in\mbox{span}\left(\bm{U}_{2,r}\right)\subset\mbox{span}\left(\left[\bm{U}_{1},\bm{U}_{2}\right]\right)$.
This completes the proof. 
\end{proof}
\vspace{0.01in}

Based on the claim of this proposition, one can to apply the analysis
for the rank-one case (Proposition \ref{pro:atomic-fn-rank1}) to
higher-rank cases. Let $\bm{U}_{\sim1}=\left[\bm{u}_{2},\cdots,\bm{u}_{r}\right]$,
and let $\bm{x}_{r}=\bm{x}-\mathcal{P}\left(\bm{x},\bm{U}_{\sim1}\right)$.
Similarly, define $\bm{u}_{1,r}$ and $\bm{h}_{r}$. It is clear that
\begin{align*}
\bm{u}_{1,r}\left(t\right) & =\bm{u}_{1}\left(t\right)-\mathcal{P}\left(\bm{u}_{1}\left(t\right),\bm{U}_{\sim1}\right)\\
 & =\bm{u}_{1}\cos t+\bm{h}\sin t\\
 & \quad-\mathcal{P}\left(\bm{u}_{1},\bm{U}_{\sim1}\right)\cos t-\mathcal{P}\left(\bm{h},\bm{U}_{\sim1}\right)\sin t\\
 & =\bm{u}_{1,r}\cos t+\bm{h}_{r}\sin t.\end{align*}
 One has \begin{align*}
 & \bm{x}-\mathcal{P}\left(\bm{x},\left[\bm{u}_{1}\left(t\right),\bm{u}_{2},\cdots,\bm{u}_{r}\right]\right)\\
 & =\bm{x}_{r}-\mathcal{P}\left(\bm{x}_{r},\bm{u}_{1,r}\left(t\right)\right).\end{align*}
 This establishes the connection between the rank-one case and the
general case, proves Proposition \ref{pro:atomic-f-periodic}, and
justifies the procedure in Section \ref{sub:compute-tmin-tmax} for
computing minimizers and maximizers. 

\bibliographystyle{ieeetr}
\bibliography{MatrixCompletion}

\end{document}